\documentclass[a4paper,UKenglish]{lipics-v2016-arxiv}
 
\usepackage{microtype}

\bibliographystyle{plainurl}

\title{Courcelle's Theorem Made Dynamic\footnote{This 
    work is supported by EU under ERC
    EQualIS (FP7-308087) and STREP Cassting (FP7-601148).}}

\author[1]{Patricia Bouyer}
\author[1]{Vincent Jug\'e}
\author[1,2]{Nicolas Markey}
\affil[1]{LSV, CNRS \& ENS Cachan, Univ. Paris-Saclay, France}
\affil[2]{IRISA, CNRS \& Inria \& Univ. Rennes 1, France}
\authorrunning{P. Bouyer, V. Jug\'e and N. Markey} 

\Copyright{Patricia Bouyer, Vincent Jug\'e and Nicolas Markey}

\EventEditors{}
\EventNoEds{0}
\EventLongTitle{}
\EventShortTitle{}
\EventAcronym{}
\EventYear{}
\EventDate{}
\EventLocation{}
\EventLogo{}
\SeriesVolume{}
\ArticleNo{}

\usepackage{tikz}
\usetikzlibrary{arrows,backgrounds,calc}
\usepackage{tkz-graph}
\usepackage{enumerate}
\usepackage{icalp-arxiv}
\usepackage{thm-restate}
\usepackage{amssymb}

\begin{document}

\maketitle

\begin{abstract}
Dynamic complexity is concerned with updating the output of a problem
when the input is slightly changed. We~study the dynamic complexity of
model checking a fixed monadic second-order formula over
evolving subgraphs of a fixed maximal graph having bounded tree-width;
here the subgraph evolves by losing or gaining edges (from the maximal graph).
We show that this problem is in \Dyn\FO (with
\LOGSPACE precomputation), via a reduction to a
Dyck reachability problem on an acyclic automaton.

\end{abstract}

\section{Introduction}
\label{sec-intro}
\subparagraph{Monadic second-order logic, tree-width of graphs, and Courcelle's theorem.}
Monadic second-order logic (\MSO) is a powerful formalism for
expressing and checking properties of graphs. It~allows first-order
quantification (over states of the graph) as well as \emph{monadic}
second-order quantification (over sets of states), and can thus
express properties such as connectivity or 3-colorability of finite
graphs. While the satisfiability problem of this logic is in general
undecidable, model checking (i.e., deciding if a formula holds true in
a given graph) is \PSPACE-complete~\cite{Var82}; when the \MSO formula is fixed,
the problem is hard for every level of the polynomial hierarchy over finite graphs~\cite{Sto76},
and can be performed in \PTIME over finite trees.

The tree-width of graphs has been defined by Robertson and
Seymour~\cite{RS84} as a measure of the \emph{complexity} of
graphs---intuitively, of how close a graph is to a tree. Many classes
of graphs have been shown to have \emph{bounded} tree-width~\cite{Bod93}.
Over such graphs, several (\NP-)hard problems can be solved
in polynomial time~\cite{Bod93}.  Model checking a fixed \MSO
formula is such an example, as proven by Courcelle in~1990~\cite{Cou90}.

\subparagraph{Dynamic problems and dynamic complexity.} 
In this paper, we focus on the \emph{dynamic complexity} of
\MSO model checking over finite graphs. Dynamic-complexity theory aims at developing algorithms
that are capable of efficiently updating the output of a problem after
a slight change in its input~\cite{DST95,PI97}.  Such algorithms would
keep track of auxiliary information about the current instance, and
update it efficiently when the instance is modified.
Consider the problem of reachability in directed graphs,
and equip such graphs with two operations, for respectively inserting
and deleting edges (one at a time). It has recently been proven that this problem
is in the class \Dyn\FO~\cite{DKMSZ15}, which was a
long-standing open problem.  Roughly speaking, a problem is in
\Dyn\FO when
it admits an algorithm updating
the solution and some auxiliary information
through \FO formulas (or, equivalently by $\AC^0$ circuits
satisfying some uniformity requirements)
after a small change in the input.

\subparagraph{Our contributions.}
We study the \MSO model-checking problem from a dynamic perspective,
considering the 
following basic operations on graphs: insertion and deletion of an edge.
We~assume that we are given a maximal graph, which embeds all
constructed game graphs along the dynamic process: this maximal graph
represents the set of all possible connections in the subgraphs we
will consider.
We first realize that, since the \MSO model-checking problem
over arbitrary finite graphs is \NP-hard (see~\cite{Kar72,Sto76}),
the \MSO model-checking problem over arbitrary
maximal graphs is unlikely to be solvable in \Dyn\FO,
even allowing \PTIME precomputation (unless \Dyn(\PTIME,\FO)=\PTIME=\NP).

We therefore make a standard restriction and assume that the maximal
graph has bounded tree-width. Under this hypothesis, we show that
the \MSO model checking over such graphs
can be solved in \Dyn\FO with \LOGSPACE precomputation.
To obtain this result, we rely on (and extend) a \Dyn\FO
algorithm for finding a Dyck path in an acyclic automaton~\cite{WS07},
and build a transformation of our model checking problem into such a
Dyck reachability problem. The latter transformation is performed by
using Courcelle's theorem, and by realizing that the runs of
bottom-up, deterministic tree automata can be computed step-by-step.
These simple steps can be stored in an
auxiliary graph, in which only few edges depend on the real edges
that exist in the original game; the correctness of the
construction then goes through the search for paths labeled with Dyck
words.

\subparagraph{Related works.}
\MSO has been extensively studied over various classes of structures
in the last 40 years, both regarding its expressiveness and regarding
the algorithmic properties of its decision problems (see~\cite{CE11}
and references therein). Similarly, numerous measures of the
\emph{complexity} of graphs, such as tree-width~\cite{RS84},
clique-width~\cite{CER93b} or entanglement~\cite{BG05}, have been
defined and studied; they provide large classes of graphs in which
different kinds of hard problems become tractable (see~\cite{Rab08}
and references therein).

On the other hand, dynamic complexity is much less developed: while
the main dynamic complexity classes were defined and studied 20 years
ago~\cite{DST95,PI97},
only few problems have been
considered from that point of view~\cite{WS07,DKMSZ15,Zeu15}.
As cited above, directed-graph
reachability has recently been proven in \Dyn\FO, which was an important
open problem in the area.

Finally, let us mention that the results reported in this paper were originally presented in the
setting of parity games played on a graph having
bounded tree-width~\cite{arxiv}.%

\section{Definitions}
\label{sec-defs}
\subsection{Monadic second-order logic over graphs}

A graph is a pair $G = \tuple{V,E}$ where $V$ is a finite set of
vertices, and $E \subseteq V \times V$ is a finite set of edges. The
size of~$G$ is the cardinality of~$V$. Formulas of the monadic
second-order logic (denoted \MSO) are built using first-order and
(monadic) second-order variables, used respectively to quantify over
vertices and sets of vertices; formulas may also use equality, and the
edge relation~$E$ of the graph. As~an example, \emph{connectivity} of
a graph can be expressed as
\[
\forall S.\ \left[
  (\forall x.\ x\in S) \ou
  (\forall x.\ x\notin S) \ou
  (\exists x,y.\ (x\in S \et y\notin S \et E(x,y))
  \right].
\]

The standard static question regarding \MSO over graphs is to decide
whether a given formula is satisfiable, or to check whether it is
satisfied in a given graph model. These problems have been extensively
considered in the literature; in particular, satisfiability is
undecidable~\cite{See76}, while model checking is \PSPACE-complete~\cite{Var82}.
We~refer to~\cite{Kre08} for a survey.

\subsection{Tree decomposition}

The notion of \emph{tree decomposition}~\cite{RS84,RS86} was
introduced by Robertson and 
Seymour. It~gives rise to classes
of graphs on which many problems that are \NP-hard in general become
tractable.

\begin{dfn}
An ordered tree decomposition of $G$ is
a pair $\calD = \tuple{\calT,\bfT}$, where $\calT =
\tuple{\calN,\calE}$ is an ordered tree, and $\bfT \colon
\calN \to 2^V$ is a function such that:
\begin{itemize}
\item for each edge $e \in E$, there exists a node $n \in
  \calN$ such that $e \in \bfT(n)^2$;
\item for each vertex $v \in V$, the set
  $\calN_v = \{n \in \calN \mid v \in \bfT(n)\}$ is
  non-empty, and the restriction of~$\calT$ to~$\calN_v$ is connected.
\end{itemize}
The \emph{width} of $\calD$ is defined as the integer
$\max\{|\bfT(n)|\mid n \in \calN\}-1$, and the
\emph{tree-width} of $G$ is the least width of all tree decompositions of~$G$.
\end{dfn}

\subsection{Tree automaton}

The notion of (deterministic, bottom-up) \emph{tree automaton}
is a powerful tool for
expressing and checking properties of finite trees.

\begin{dfn}
A \emph{tree automaton} is a tuple $\calA =
\tuple{Q,\Sigma,\iota,Q_{\END},\delta}$ where $Q$~is a finite set of
states, $\Sigma$~is a finite input alphabet, $\iota \in
Q$ is the initial state, $Q_{\END} \subseteq Q$ is the set of
accepting states and $\delta \colon Q^2 \times \Sigma \to Q$ is the
transition function.

Let $\calT = \tuple{\calN,\calE}$ be a binary ordered labeled tree, with label set~$\Sigma$.
The~\emph{run} of~$\calA$ over~$\calT$ is the function $\rho \colon \calN \to Q$ such that:
\begin{itemize}
 \item for every leaf~$n$ of~$\calT$ with label~$\lambda$, we~have
   $\rho(n) = \delta(\iota,\iota,\lambda)$;
 \item for every internal node~$n$ of~$\calT$ with label~$\lambda$ and
   with children $m_1$ and~$m_2$, we have $\rho(n) =
   \delta_2(\rho(m_1),\rho(m_2),\lambda)$.
\end{itemize}
If, furthermore, the~run~$\rho$ maps the root of~$\calT$ to an accepting state
$q \in Q_{\END}$, then we say that $\rho$~is \emph{accepting},
and that the automaton~$\calA$ \emph{accepts} the tree~$\calT$.
\end{dfn}

\subsection{Dynamic complexity theory}

In this paper, we adapt Courcelle's theorem to a dynamic-complexity
framework. We~briefly introduce the formalisms of
descriptive- and dynamic complexity here, and refer to~\cite{PI97,Imm99,Hes03,WS07}
for more details.

Descriptive complexity aims at characterizing positive instances of a
problem using logical formulas: the~input is then given as a
logical structure described by a set of $k$-ary predicates
(the~\emph{vocabulary}) over its universe. For example, a~directed
graph can be represented as a binary predicate representing its edges,
with the set of vertices (usually identified with $\{1,\ldots,n\}$
for some~$n$) as the universe. Whether
each vertex has at most one outgoing edge is expressed by the
first-order formula
$\forall x, y, z. (E(x,y) \et E(x,z)) \Rightarrow (y=z)$.
The~complexity class~\FO contains all problems that can be
characterized by such first-order formulas.  This class corresponds to
the circuit-complexity class~$\AC[0]$ (under adequate uniformity
assumptions)~\cite{BIS90}.

Dynamic complexity aims at developing algorithms that can efficiently
update the output of a problem (e.g. reachability of a given vertex in
a graph) when the input is slightly changed.
In~this setting, algorithms may take advantage of previous computations
in order to very quickly recompute the solution for the modified input.

Formally, following~\cite{WS07}, a decision problem~$\sfS$ is a
subset of the set of $\tau$-structures $\Struct{\tau}$ built on a
vocabulary~$\tau$. In~order to turn~$\sfS$ into a dynamic
problem~$\Dyn\sfS$, we~need to define a finite set of initial inputs
and a finite set of allowed updates. For instance, we might use an
arbitrary graph as initial input, then use a $2$-ary operator
$\textsf{ins}(x,y)$ that would insert an edge between vertices~$x$
and~$y$.

Hence, we associate the decision problem $\sfS$
with a set $\Updates$ of update functions
$\up\colon \Struct{\tau} \to \Struct{\tau}$.
We~identify every non-empty word in
${\Struct{\tau} \cdot \Updates^\ast}$ with the $\tau$-structure
obtained by applying a sequence of update operations
to an initial structure.
Denoting by $\STRUCT{n}{\tau}$ and by $\Updates_n$
the set of $\tau$-structures and of updates restricted to a
universe of size $n$, we define the dynamic language 
$\Dyn\sfS_n$ as the set of those words in $\STRUCT{n}{\tau} \cdot \Updates_n^\ast$
that correspond to structures of $\sfS$.
The dynamic language $\Dyn\sfS$ is then defined as the union
(over all~$n$) of all such languages.

Given two complexity classes $\calC$ and~$\calC'$,
a~dynamic problem $\Dyn\sfS$
with set of updates $\Updates$ belongs to the class
$\Dyn(\calC,\calC')$ if, and only~if,
there exists an auxiliary vocabulary~$\tau^{\aux}$,
a~$\calC$-computable \emph{initialisation} function
$f^{\INIT}\colon \Struct{\tau} \to \Struct{\tau^{\aux}}$,
a~$\calC'$-computable \emph{update} function
$f^{\up} \colon \Struct{\tau^{\aux}} \times \Updates \to \Struct{\tau^{\aux}}$,
and a $\calC'$-computable \emph{decision} function
$f^{\dec} \colon \Struct{\tau^{\aux}} \to \{0,1\}$ such that:
\begin{itemize}
 \item for every structure $A \in \Struct{\tau}$ and every
 upate $\up \in \Updates$, we have
 $f^{\INIT}(\up(A)) = f^{\up}(f^{\INIT}(A),\up)$;
 \item for every structure $A \in \Struct{\tau}$, we have
 $A \in \sfS \Leftrightarrow f^{\dec}(f^{\INIT}(A)) = 1$.
\end{itemize}
If, furthermore, $f^{\INIT}$ maps the empty structure of $\Struct{\tau}$
to the empty structure of $\Struct{\tau^{\aux}}$, then
we say that $\Dyn\sfS$ belongs to the class $\Dyn\calC'$.

Informally, $\Dyn\sfS$ belongs to $\Dyn(\calC,\calC')$
if, by~maintaining an auxiliary structure (which may
have an initial cost in~$\calC$),
an~algorithm can
tackle every update on the input structure
with a cost in~$\calC'$.
If~the initial cost is reduced to zero when the initial input
is the empty structure, then $\Dyn\sfS$ belongs to $\Dyn\calC'$.

In~this paper, we consider the case where $\calC=\LOGSPACE$ and $\calC' = \FO$,
meaning that precomputations will be carried out in~$\LOGSPACE$ and that
first-order formulas will be used to describe how
predicates are updated along transitions.

\subsection{Main result}

We are now in a position to formally define our problem and state our
main result. We~fix an \MSO formula~$\varphi$.
We follow the approach of~\cite{DG08a}, and
represent graphs as tuples $\tuple{V,E}$.
Given a universe $V$, our initial structure consists of a tuple
$\tuple{V,E_\star,E}$, where $E_\star$ is
a maximal set of edges and $E \subseteq E_\star$
is an initial set of edges.

We focus below on the operations of
insertion and deletion of edges that belong to $E_\star$. More
precisely, we~let $\Updates_{E_\star} = \{\textsf{ins}(e),
\textsf{del}(e) \mid e \in E_\star\}$. The effect of a
sequence of update operations, represented as a
word~$w\in\Updates_{E_\star}^\ast$, over a set~$E \subseteq E_\star$ of
edges, is denoted by $w(E)$, and is defined inductively~as:
  \[
  \begin{array}{ll>{\qquad\qquad\qquad}ll}
    E &\quad\text{ if $w=\epsilon$} & 
    E\cup\{e\} & \quad\text{ if $w=\textsf{ins}(e)$} \\
    w'(a(E)) &\quad\text{ if $w=w'\cdot a$} &
    E\setminus\{e\} &\quad \text{ if $w=\textsf{del}(e)$} 
  \end{array}
  \]
  For~$w\in \Updates_{E_\star}^{\ast}$ and $E \subseteq E_\star$,
  we~write $G_{w(E)}$ for the graph with vertex set $V$ and
  edge set~$w(E)$.
  It is to be noted that $G_{w(E)}$ is a subgraph
  of $\tuple{V,E_\star}$. Finally, we~let
  \[
  \Dyn\GraphMSO_\varphi =
  \{\tuple{V,E_\star,E} \cdot w \mid w\in\Updates_{E_\star}^\ast \text{ and } G_{w(E)} \vDash \varphi\}.
  \] 

  As mentioned in the introduction, the above problem is unlikely to
  be solvable in
  \Dyn(\PTIME,\FO).
  We therefore  adopt the idea of bounding the tree-width of the maximal graph $\tuple{V,E_\star}$.
  We fix a positive integer~$\kappa$ and restrict the set of admissible initial inputs:
  the graph $\tuple{V,E_\star}$ should be of tree-width at most $\kappa$.
  We~thus refine our problem as follows:
  \[
  \Dyn\GraphMSO_{\kappa,\varphi} = \{\tuple{V,E_\star,E} \cdot w \mid
  \tuple{V,E_\star} \text{ has tree-width at most } \kappa\} \cap
  \Dyn\GraphMSO_\varphi.
  \]
  Our main contribution is a dynamic algorithm for deciding
  $\Dyn\GraphMSO_{\kappa,\varphi}$.

  \begin{restatable}{thm}{mainthm}
    \label{mainresult}\label{thm-main}
    Fix a positive integer~$\kappa$ and an \MSO formula~$\varphi$.
    The problem $\Dyn\GraphMSO_{\kappa,\varphi}$
        can be solved in $\Dyn(\LOGSPACE,\FO)$.
\end{restatable}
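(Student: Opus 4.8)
The plan is to combine Courcelle's theorem with a reduction to a Dyck-reachability question on a fixed acyclic graph whose edge set is only \emph{locally} affected by each update, and then to invoke a $\Dyn\FO$ procedure for maintaining Dyck reachability. The point is that all the subgraphs produced along a run share structure, and that a single edge insertion or deletion perturbs this structure only at one place.

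First, during the $\LOGSPACE$ precomputation I would build, using known logspace algorithms for bounded-tree-width graphs, a tree decomposition $\calD = \tuple{\calT,\bfT}$ of the maximal graph $\tuple{V,E_\star}$ of width at most $\kappa$. Since deleting edges never violates either condition of a tree decomposition, the \emph{same} decomposition $\calD$ is valid for every subgraph $G_{w(E)}$; I would moreover normalise $\calD$ into a binary ordered tree so that it can feed a bottom-up tree automaton as defined above. For every edge $e \in E_\star$ I would fix a unique node $n(e) \in \calN$ whose bag $\bfT(n(e))$ contains both endpoints of $e$, and encode each node as a letter of a finite alphabet $\Sigma$, recording its type in the binary decomposition together with, for the edges assigned to that node, one bit saying whether the edge currently lies in $E$ (the vertices of a bag being distinguished by position rather than by global identity keeps $\Sigma$ finite). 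By Courcelle's theorem, from $\varphi$ and $\kappa$ I obtain, independently of the input and hence for free, a bottom-up deterministic tree automaton $\calA = \tuple{Q,\Sigma,\iota,Q_{\END},\delta}$ such that $G_{w(E)} \vDash \varphi$ if and only if $\calA$ accepts the labelled tree $\calT_{w(E)}$ obtained by labelling $\calD$ according to $w(E)$. The key locality observation is then that each $\textsf{ins}(e)$ or $\textsf{del}(e)$ changes the label of exactly one node, namely $n(e)$.

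The central step is to turn acceptance of $\calT_{w(E)}$ by $\calA$ into a Dyck-reachability question on an acyclic graph $H$. I would build $H$ from $\calT$ and $\calA$: its vertices are essentially pairs $\tuple{n,q}$ with $n \in \calN$ and $q \in Q$, laid out along the Euler-tour linearisation of $\calT$, and its edges carry opening and closing brackets encoding the branching of $\calT$. Descending to the left child, descending to the right child, and returning to the parent are marked by matching brackets, so that a path from a designated source to a designated target whose label is a balanced Dyck word exists if and only if there is an accepting run $\rho$ of $\calA$ on $\calT_{w(E)}$; the guessed states along such a path are exactly the values $\rho(n)$, and well-nestedness enforces the constraint $\rho(n) = \delta(\rho(m_1),\rho(m_2),\lambda)$ at each internal node together with the leaf constraint. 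The bracket structure and the tree skeleton of $H$ are fixed and computed once at initialisation; the only edges of $H$ that depend on the current set $E$ are those realising the local transition $\delta(\cdot,\cdot,\lambda)$ at nodes whose letter carries edge information. Consequently each operation on $G$ translates into a bounded number of bracket-labelled edge insertions and deletions in $H$.

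The last step is to maintain Dyck reachability in $H$ under these bounded modifications, and to answer acceptance by testing for a Dyck path between source and target. Here I would rely on, and extend, the $\Dyn\FO$ algorithm for finding a Dyck path in an acyclic automaton of~\cite{WS07}: the graph $H$ is acyclic because it follows the acyclic Euler-tour structure of the finite tree $\calT$, and each update to $G$ produces only finitely many edge updates to $H$, each of which the $\Dyn\FO$ procedure should absorb by a first-order update of its auxiliary relations, the whole construction fitting the $\Dyn(\LOGSPACE,\FO)$ budget. I expect the \emph{main obstacle} to lie precisely in this reduction to Dyck reachability: one must design the bracket alphabet and the graph $H$ so that well-nested paths correspond \emph{exactly} to accepting runs, so that $H$ stays acyclic, and so that locality is genuinely preserved—a single edge change in $G$ touching only a bounded number of edges of $H$—while simultaneously verifying that the Dyck-reachability maintenance of~\cite{WS07} applies to this precise update model and composes cleanly with the $\LOGSPACE$ precomputation of the fixed skeleton.
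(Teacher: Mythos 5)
Your overall architecture coincides with the paper's: a \LOGSPACE-precomputed binary tree decomposition of the maximal graph, Courcelle's automaton running on a labeling of that \emph{fixed} tree, the observation that each insertion or deletion changes the label of exactly one node, and a reduction to Dyck reachability in an acyclic graph maintained by (an extension of) the \Dyn\FO algorithm of~\cite{WS07}. The gap lies in the one step you yourself flag as the main obstacle, and it is not a routine verification: the graph $H$ as you describe it does not work. With vertices of the form $\tuple{n,q}$ carrying a \emph{single} automaton state, and brackets that encode only tree navigation (descend left, descend right, return to the parent), a Dyck path cannot enforce $\rho(n)=\delta(\rho(m_1),\rho(m_2),\lambda)$: when the path returns from the right subtree of $n$ at a vertex $\tuple{m_2,q_2}$, the state $q_1$ it earlier computed at $\tuple{m_1,q_1}$ has been forgotten --- a path has no memory beyond its current vertex and the well-nesting condition on its label, and purely positional brackets match regardless of which states were visited in between. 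Consequently the edge from $\tuple{m_2,q_2}$ into $\tuple{n,\cdot}$ must be present for \emph{every} possible left-child state, and spurious Dyck paths (hence false acceptances) appear. The pending states of all ancestors whose right subtree is still being processed have to be stored somewhere, either in the vertices or in the unmatched open brackets; your sketch stores them nowhere.

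There are two ways to close this hole, and either one is exactly the missing idea. The paper's way: vertices are not single states but functions $\pi\colon\calS_i\to Q$ on the post-order frontier (the bottom-up progression $\calS_i$), so that when an internal node is processed \emph{both} children's states are available in the current vertex; the vertex set then has size about $\sum_i |Q|^{|\calS_i|}$, which is polynomial only because the decomposition of Lemma~\ref{lemma:puredecomposition} has height $\calO(\log N)$ --- a requirement you never impose, and without which this route blows up. In the paper the brackets then serve an entirely different purpose (the one behind Figure~\ref{fig:cluster}): carrying $\pi$ across a label-selection hub so that exactly one neutral edge per level depends on the current edge set. The alternative repair, closer to your sketch, is to put the states into the brackets themselves: after the left subtree of $n$ evaluates to $q_1$, descend into the right subtree through an opening bracket $(n,q_1)^+$ and return through a matching $(n,q_1)^-$ into $\tuple{n,\delta(q_1,q_2,\lambda)}$; Dyck matching then genuinely transfers $q_1$ across the right-subtree traversal, the graph stays acyclic, your logarithmic-height requirement disappears, and each update of $G$ touches at most $2|Q|^2$ edges of $H$ (a constant, so composing constantly many \FO updates is still \FO). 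In both cases the label set is of polynomial, not constant, size, so the extension of~\cite{WS07} you would need is the one stated in Lemma~\ref{lem:Dyck}, not merely an adjustment of the update model. As submitted, however, the reduction at the heart of your argument is unproved and, read literally, false.
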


We~give a short overview of the proof here. Our~algorithm consists in
transforming our \MSO model checking problem into an equivalent Dyck
reachability problem over a labeled acyclic graph. The~latter problem
is known to be in \Dyn\FO~\cite{WS07}, although~we~had to adapt the
algorithm to our setting.
Our~approach for building this acyclic graph follows from an
automata-based construction used for proving Courcelle's theorem:
along some linearization of a tree decomposition of the maximal graph,
we can inductively compute local information about the possible
computations of a bottom-up tree automaton.
These computations can be represented as finding a path in an acyclic
graph. However, we~have to resort to \emph{Dyck paths} in order to
make our acyclic graph efficiently updatable when the input graph
is modified.

\section{Courcelle's theorem}
\label{sec-treedec}
Courcelle's theorem is not based on working directly with tree
decompositions of graphs, but on labeled ordered trees whose labels
are chosen from a finite alphabet, and that represent such tree
decompositions. We begin with defining such trees.

\begin{dfn}
Let $G = \tuple{V,E}$ be a graph, and let $\calD = \tuple{\calT,\bfT}$
be a binary ordered tree decomposition of~$G$ of width~$\kappa$.
Let~$\Sigma = 2^{\{0,\ldots,\kappa\}} \times
2^{\{0,\ldots,\kappa\}} \times 2^{\{0,\ldots,\kappa\}^2}$ be a
(finite) set of labels.
We call \emph{proper $\calD$-coloring} of $G$ a function $\chi \colon
V \to \{0,\ldots,\kappa\}$ such that, for all nodes~$n$ of~$\calT$,
the~restriction of~$\chi$ to~$\bfT(n)$ is injective.  We~then call
\emph{$(\chi,\calD)$-succinct tree decomposition} of~$G$ (we~may omit to
mention~$\chi$ and $\calD$ if it is clear from the context)
the rooted tree obtained by labeling every node~$n$ of~$\calT$ with a
label $\lambda(n) = \tuple{\chi(A),\chi(B),\chi(C)} \in \Sigma$ as follows:
\begin{itemize}
  \item we set $A = \bfT(n) \cap \bfT(m)$ if $n$~has a parent~$m$
    in~$\calT$, and $A = \emptyset$ if $n$~is the root of~$\calT$;
  \item we set $B = \bfT(n) \setminus A$;
    \item we set $C = \{\tuple{v,w} \in E \mid
    \tuple{v,w} \in \bfT(n)^2 \setminus A^2\}$;
  \item if~$X$~is a set of vertices, then $\chi(X) = \{\chi(v) \mid v
    \in X\}$, and if~$X$~is a set of edges, then $\chi(X) =
    \{\tuple{\chi(v),\chi(w)} \mid \tuple{v,w} \in X\}$.
\end{itemize}
\end{dfn}

These constructions are illustrated in Figure~\ref{fig:etendue-0}, which
displays a graph~$G$, a~tree decomposition $\calD =
\tuple{\calT,\bfT}$ of~$G$, a proper $\calD$-coloring of~$G$, and its
associated succinct tree decomposition.

\begin{figure}[t]
\centering
  \begin{tikzpicture}[scale=0.4,>=stealth]
    \SetGraphUnit{2.5}
    \begin{scope}[VertexStyle/.append style = {minimum size = 15pt, inner sep = 0pt}]
      \Vertex[x=5.1,y=14.75,L={$v_8$}]{1}
      \WE[L={$v_7$}](1){2}
      \WE[L={$v_6$}](2){3}
      \NO[L={$v_5$}](1){4}
      \WE[L={$v_4$}](4){5}
      \WE[L={$v_3$}](5){6}
      \NO[L={$v_2$}](4){7}
      \WE[L={$v_1$}](7){8}
    \end{scope}
    \node[anchor=south] at (2.6,21.3) {Graph $G$};
    
    \Edge[style={<->,thick}](1)(2)
    \Edge[style={<-,thick}](1)(4)
    \Edge[style={->,thick}](2)(4)
    \Edge[style={<->,thick}](2)(5)
    \Edge[style={<-,thick}](4)(5)
    \Edge[style={<->,thick}](2)(6)
    \Edge[style={->,thick}](3)(6)
    \Edge[style={<-,thick}](5)(6)
    \Edge[style={<->,thick}](4)(7)
    \Edge[style={<-,thick}](5)(8)
    \Edge[style={->,thick}](7)(8)
    \Loop[dist=2cm,dir=WE,style={->,thick}](8)

    \begin{scope}[yshift=-2mm]
    \SetGraphUnit{1.6}

    \placenode{0}{10.6}{a}{1}
    \Vertex[Node,L={$v_5$}]{4}
    \Vertex[Node,L={$v_4$}]{5}
    \Vertex[Node,L={$v_2$}]{7}
    \Edge(4)(5)
    \Edge(4)(7)
    
    \placenode{5.3}{10.6}{b}{2}
    \Vertex[Node,L={$v_4$}]{5}
    \Vertex[Node,L={$v_2$}]{7}
    \Vertex[Node,L={$v_1$}]{8}
    \Edge(5)(8)
    \Edge(7)(8)
    
    \placenode{0}{5.3}{c}{3}
    \Vertex[Node,L={$v_7$}]{2}
    \Vertex[Node,L={$v_5$}]{4}
    \Vertex[Node,L={$v_4$}]{5}
    \Edge(2)(4)
    \Edge(2)(5)
    \Edge(4)(5)
    
    \placenode{5.3}{5.3}{h}{4}
    \Vertex[Node,L={$v_8$}]{1}
    \Vertex[Node,L={$v_7$}]{2}
    \Vertex[Node,L={$v_5$}]{4}
    \Edge(1)(2)
    \Edge(1)(4)
    \Edge(2)(4)
    
    \placenode{0}{0}{d}{5}
    \Vertex[Node,L={$v_7$}]{2}
    \Vertex[Node,L={$v_4$}]{5}
    \Vertex[Node,L={$v_3$}]{6}
    \Edge(2)(5)
    \Edge(2)(6)
    \Edge(5)(6)
    
    \placenode{5.3}{0}{e}{6}
    \Vertex[Node,L={$v_6$}]{3}
    \Vertex[Node,L={$v_3$}]{6}
    \Edge(3)(6)
    
    \placenode{-5.3}{0}{f}{7}
    
    \Edge(E1)(W2)
    \Edge(S1)(N3)
    \Edge(E3)(W4)
    \Edge(S3)(N5)
    \Edge(E5)(W6)
    \Edge(W5)(E7)
    
    \node[anchor=south] at (N1) {\textbf{ROOT}};
    \draw[ultra thick] (NW1) -- (NNW1) -- (NNE1) -- (NE1);
    
    \placelabelnode{17.2}{10.6}{\emptyset}{\{0,1,2\}}{\{(0,1),}{(1,0),(2,0)\}}{a}{1}
    \placelabelnode{23.6}{10.6}{\{1,2\}}{\{0\}}{\{(0,0),}{(0,2),(1,0)\}}{b}{2}
    \placelabelnode{17.2}{5.3}{\{0,2\}}{\{1\}}{\{(1,0),}{(1,2),(2,1)\}}{c}{3}
    \placelabelnode{23.6}{5.3}{\{0,1\}}{\{2\}}{\{(0,2),}{(1,2),(2,1)\}}{g}{4}
    \placelabelnode{17.2}{0}{\{1,2\}}{\{0\}}{\{(0,1),}{(0,2),(1,0)\}}{d}{5}
    \placelabelnode{23.6}{0}{\{0\}}{\{2\}}{\{(2,0)\}}{}{e}{6}
    \placelabelnode{10.8}{0}{\emptyset}{\emptyset}{\emptyset}{}{f}{7}
    
    \Edge(E1)(W2)
    \Edge(S1)(N3)
    \Edge(E3)(W4)
    \Edge(S3)(N5)
    \Edge(E5)(W6)
    \Edge(W5)(E7)
    
    \node[anchor=south] at (N1) {\textbf{ROOT}};
    \draw[ultra thick] (NW1) -- (NNW1) -- (NNE1) -- (NE1);
    
    \node[anchor=north] at (2.6,-4.2) {Tree decomposition $\calD$ of $G$};
    \node[anchor=north] at (20.35,-4.2) {$(\chi,\calD)$-succinct tree decomposition of $G$};
    \end{scope}
    
    \node[anchor=south] at (20.35,21.3) {$\calD$-proper coloring of $G$:};
    \node[anchor=south] at (17.35,19.5) {$v_1 \mapsto  0$};
    \node[anchor=south] at (17.35,18.3) {$v_2 \mapsto  1$};
    \node[anchor=south] at (17.35,17.1) {$v_3 \mapsto  0$};
    \node[anchor=south] at (17.35,15.9) {$v_4 \mapsto  2$};
    \node[anchor=south] at (23.35,19.5) {$v_5 \mapsto  0$};
    \node[anchor=south] at (23.35,18.3) {$v_6 \mapsto  2$};
    \node[anchor=south] at (23.35,17.1) {$v_7 \mapsto  1$};
    \node[anchor=south] at (23.35,15.9) {$v_8 \mapsto  2$};
  \end{tikzpicture}
\caption{Graph, tree decomposition, proper coloring and succinct tree decomposition}
\label{fig:etendue-0}\label{ex-xTD-0}
\end{figure}
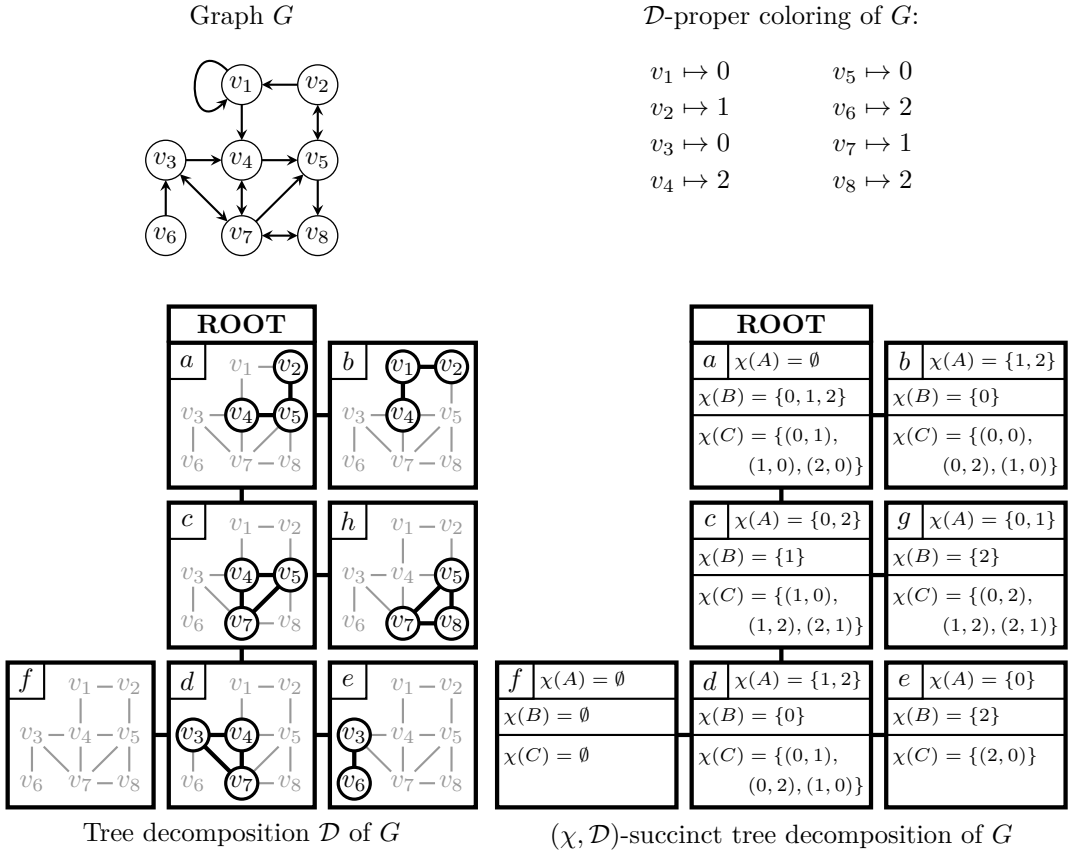

Observe that, given a tree decomposition~$\calD$, there always exist
$\calD$-colorings~$\chi$ of~$G$ and an associated $(\chi,\calD)$-succinct tree
decomposition. They are typically computed from~$\calD$ in a top-down
fashion.
Furthermore, note that a graph~$G$ may have several tree
decompositions~$\calD$ of width~$\kappa$ and, for each of them,
several proper $\calD$-colorings. Hence, $G$~may have several succint
tree decompositions. Yet, from all of them we are able to
reconstruct~$G$ (up~to graph isomorphism), and therefore to check
whether $G$ satisfies the formula~$\varphi$.
A~more precise and powerful version of this statement is the theorem
stated below, which is a variant of the versions of Courcelle's theorem
of~\cite[Section 11.4]{FG06} and~\cite[Section 3.3]{Kre08}.

\begin{thm}\label{thm:courcelle}
  Fix a positive integer $\kappa$ and an \MSO formula $\varphi$.
There exists a (bottom-up, deterministic) tree
automaton~$\mathcal{A}_{\varphi,\kappa}$ such that, for all graphs $G
= \tuple{V,E}$ and all succinct tree decompositions $\calT$ of~$G$ of
width~$\kappa$, $G$~satisfies~$\varphi$ if, and only~if,
$\mathcal{A}_{\varphi,\kappa}$~accepts~$\calT$.
\end{thm}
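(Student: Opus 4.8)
The plan is to prove this by exhibiting the MSO-to-tree-automaton translation explicitly, building on the standard inductive encoding of a graph inside its succinct tree decomposition. The central observation is that a $(\chi,\calD)$-succinct tree decomposition carries enough information, at each node, to recover the colored subgraph generated so far. I would first make precise what subgraph a subtree encodes: to each node $n$ of $\calT$, associate the union $U_n$ of the bags $\bfT(m)$ over all descendants $m$ of $n$, together with the edges witnessed in those bags. The labels $\lambda(n) = \tuple{\chi(A),\chi(B),\chi(C)}$ record exactly which colors are shared with the parent (the set~$A$), which are freshly introduced at~$n$ (the set~$B$), and which edges appear at~$n$ and are not already accounted for above (the set~$C$). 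The key structural fact, following from the connectivity condition of tree decompositions, is that a vertex appearing in $U_n$ but not in $\bfT(n)$ can never reappear in any bag outside the subtree rooted at~$n$; hence its color is ``forgotten'' and can be reused. This is what makes a finite-state, bottom-up computation possible.

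Next I would invoke the classical Feferman--Vaught-style decomposition for MSO, as presented in~\cite[Section 11.4]{FG06} and~\cite[Section 3.3]{Kre08}. The idea is to work with the $q$-type of the encoded structure, where $q$ is the quantifier rank of~$\varphi$: the set of MSO sentences of rank at most~$q$ (over the fixed, finite vocabulary consisting of the edge relation together with the $\kappa+1$ color predicates) that hold in the colored graph $G_{\restriction U_n}$ generated at~$n$. Since the vocabulary and rank are fixed, there are only finitely many such types, and this finite set becomes the state space~$Q$ of the automaton. The crucial compositional lemma states that the $q$-type of the graph assembled at an internal node~$n$ is determined by the $q$-types at its two children~$m_1,m_2$ together with the label $\lambda(n)$ at~$n$. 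This is precisely a Feferman--Vaught composition: the types combine under disjoint union, then are modified by the identification dictated by the shared-color set~$A$ and by the adjunction of the new edges in~$C$ and new vertices in~$B$. I would define the transition function $\delta$ of $\calA_{\varphi,\kappa}$ to implement exactly this composition, set $\iota$ to be the $q$-type of the empty structure (consistent with the leaf clause, where both virtual children contribute $\iota$), and let $Q_{\END}$ be the set of types containing the sentence~$\varphi$ itself.

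With these pieces in place, correctness follows by a straightforward structural induction on~$\calT$: I would prove that $\rho(n)$ equals the $q$-type of the colored subgraph generated at~$n$, using the leaf and internal clauses of the run definition together with the composition lemma at each step. Applied at the root, where $A = \emptyset$ and $U_{\text{root}}$ is all of~$V$, this yields that the full graph~$G$ has been reconstructed (up to isomorphism, which is all MSO can detect) and that $\rho(\text{root})$ is its $q$-type; hence $\rho$ is accepting if and only if $\varphi \in \rho(\text{root})$, i.e.\ $G \vDash \varphi$. Because every succinct tree decomposition of width~$\kappa$ reconstructs the same graph up to isomorphism, the equivalence is independent of which coloring and decomposition were chosen, as required by the statement.

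I expect the main obstacle to be the careful bookkeeping in the composition lemma, rather than any deep conceptual difficulty. Two points demand particular care. First, the Feferman--Vaught theorem in its textbook form handles disjoint unions, whereas here the two subtrees share the vertices colored by~$A$ at the common parent; I must argue that the connectivity property guarantees the overlap is exactly~$\bfT(m_1) \cap \bfT(m_2) \subseteq A$ and that no edges straddle the two parts except those recorded locally, so that the gluing can be expressed as a disjoint union followed by a definable quotient and edge-insertion, both of which act definably on $q$-types. Second, I must verify that the color reuse is harmless: because a forgotten vertex's color may later be reassigned, the $q$-type I carry upward must quantify over the \emph{uncolored} graph (or, equivalently, forget the color predicates on vertices no longer in the current bag) so that the finitely many colors never cause the encoded structure to diverge from the true graph. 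Handling this forgetting operation definably at each transition is the delicate step, but it is standard in the automata-theoretic proofs of Courcelle's theorem, and once it is set up correctly the induction closes cleanly.
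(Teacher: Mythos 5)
Your proposal is correct and takes essentially the same route as the paper: the paper does not prove Theorem~\ref{thm:courcelle} itself, but presents it as a variant of the versions of Courcelle's theorem in \cite[Section 11.4]{FG06} and \cite[Section 3.3]{Kre08}, whose proofs are precisely the construction you describe---states given by rank-$q$ \MSO types of the colored graph assembled below each node, transitions implementing Feferman--Vaught composition, and the connectivity condition of tree decompositions justifying the forgetting and reuse of colors. The bookkeeping subtleties you flag (gluing along the shared bag and dropping color predicates on vertices that leave the bag) are exactly the delicate points handled in those cited proofs, so your argument is the paper's proof-by-reference written out in full.
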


Making Theorem~\ref{thm:courcelle} useful further requires
being able to compute succinct tree decompositions efficiently.
This is possible thanks to the following result, which is
proven in~\cite{EJT10}:
\begin{lem}
  \label{lemma:puredecomposition}
  Let $G$ be a graph of size~$N$ and tree-width~$\kappa$. We~can
  construct in space $\calO(c(\kappa) \log_2 N)$ an ordered binary tree
  decomposition of~$G$ of size at most $2N$,
  width at most $4 \kappa+3$, and height at most $c(\kappa) \cdot
  (\log_2(N)+1)$, where $c(\kappa)$ only depends
  on~$\kappa$.
\end{lem}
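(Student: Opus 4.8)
The plan is to prove Lemma~\ref{lemma:puredecomposition} by combining two ingredients: a fast construction of some (not necessarily balanced) tree decomposition of bounded width, and a logarithmic-space balancing procedure that turns it into a shallow binary tree decomposition while only mildly inflating the width. The statement is essentially the decomposition result of \cite{EJT10}, so the task is to assemble these pieces and track the space and parameter bookkeeping carefully.

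First I would recall that, by the space-efficient version of Bodlaender's theorem (or its logspace counterpart), for a graph $G$ of size $N$ and tree-width~$\kappa$ one can compute, in space $\calO(c(\kappa)\log_2 N)$, a tree decomposition of~$G$ of width $\calO(\kappa)$ with $\calO(N)$ nodes. The key quantitative point is that all known such constructions are \emph{parameterized}: the space usage is a product of a function $c(\kappa)$ depending only on~$\kappa$ and a $\log_2 N$ factor coming from the need to address the $N$ vertices and nodes with pointers. This is exactly the shape of bound claimed, and it is what forces the use of the logspace (rather than merely polynomial-time) model.

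The heart of the argument is the \emph{balancing} step, which is where the height bound $c(\kappa)\cdot(\log_2(N)+1)$ comes from. The plan here is to invoke the standard tree-rebalancing technique: any tree with $\calO(N)$ nodes admits a separator node whose removal splits it into pieces of at most half the size, and recursively rearranging the decomposition around such separators yields height $\calO(\log N)$. The price is that each rebalancing step merges a bounded number of adjacent bags, so the width grows by a constant factor; the specific bound $4\kappa+3$ reflects combining up to four bags of size $\kappa+1$ (minus the shared overlaps) in the construction of \cite{EJT10}. I~would also binarize the tree—replacing any node of arbitrary degree by a small gadget of binary nodes with duplicated bags—which preserves width and only multiplies the node count and height by constants, keeping the size within~$2N$ and the height within $c(\kappa)\cdot(\log_2 N+1)$ after absorbing constants into~$c(\kappa)$.

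The main obstacle, and the reason this is cited rather than reproved from scratch, is carrying out the balancing \emph{in logarithmic space} rather than merely in polynomial time: the naive recursive rebalancing maintains a stack of recursion of depth $\calO(\log N)$, each frame storing pointers of size $\calO(\log N)$, which would already cost $\calO(\log^2 N)$ space unless one is careful. The delicate part of \cite{EJT10} is to recompute separators on the fly and to represent the output decomposition implicitly, so that each bag and each tree edge can be produced by a logspace transducer querying the input decomposition, without materializing the whole balanced tree. I~would therefore treat this logspace implementation as the technical core supplied by \cite{EJT10}, and present the proof as the verification that its guarantees—width at most $4\kappa+3$, size at most~$2N$, height at most $c(\kappa)(\log_2 N+1)$, and space $\calO(c(\kappa)\log_2 N)$—are precisely the ones we need, together with the routine binarization and ordering of the resulting tree.
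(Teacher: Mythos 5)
Your proposal takes essentially the same route as the paper: the paper does not prove this lemma at all, but states it as a result established in~\cite{EJT10}, exactly as you defer the technical core (the logspace construction and balancing) to that reference. Your additional gloss on the internals (logspace Bodlaender plus separator-based rebalancing and binarization, with the width inflating to $4\kappa+3$) is a plausible account of what \cite{EJT10} does, but in both your write-up and the paper the substance is the citation itself.
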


\section{Towards a dynamic algorithm}
\label{sec-stratview}
\looseness=-1
In this section, we focus on making Courcelle's theorem dynamic.
We~fix an input (bottom-up, deterministic) tree automaton~$\calA$ and
an input graph~$G$, and we assume that we have a succinct tree
decomposition~$\calT$ of~$G$.  We~transform the language-theoretic problem of
checking whether $\calA$ accepts~$\calT$ into a Dyck reachability
problem.  More precisely, we~build a graph~$\Gamma_G$ and establish a
correspondence between some Dyck paths in this graph and the (accepting)
runs of~$\calA$ on~$\calT$.

\subsection{State progression}
\label{subsec:reductiontodyck}

A first step towards our goal consists in
performing a sequence of transformations on~$\calT$.

\begin{dfn}
  Let $\calT$ be an ordered tree with $N$ nodes.
  The \emph{post order} on $\calT$ is defined as the
  linear order $\prec$ such that:
  \begin{itemize}
   \item if $m$ is a strict ancestor of $n$, then $n \prec m$;
   \item if an internal node $n$ has a left child $m_1$ and a right child $m_2$,
   then $m_1$ and its descendants are all smaller than $m_2$ and its descendants
   (for the order $\prec$).
  \end{itemize}
  There exists a unique labeling $\posti\colon T \to \{1,\ldots,N\}$,
  which we call \emph{post index},
  such that $n \preccurlyeq m \Leftrightarrow \posti(n) \leqslant \posti(m)$.
  We also commonly denote by $n_i$ the unique node of $\calT$ such that
  $i = \posti(n_i)$.
  
  We further call \emph{bottom-up progression} of $\calT$
  the sequence $\calS_0,\ldots,\calS_N$ of subsets of vertices of $\calT$ defined by
  $\calS_i = \{n \mid \posti(n) \leqslant i \text{ and } \posti(m) > i \text{ for all strict ancestors } m \text{ of } n\}$.
\end{dfn}
Observe that, by construction, we always have $\calS_0 = \emptyset$.

Figure~\ref{fig:etendue-1} presents a binary ordered tree~$\calT$, 
post indices labeling, 
and bottom-up progression.  Observe that $\calT$ is the
(succinct) tree decomposition presented in
Figure~\ref{fig:etendue-0}.

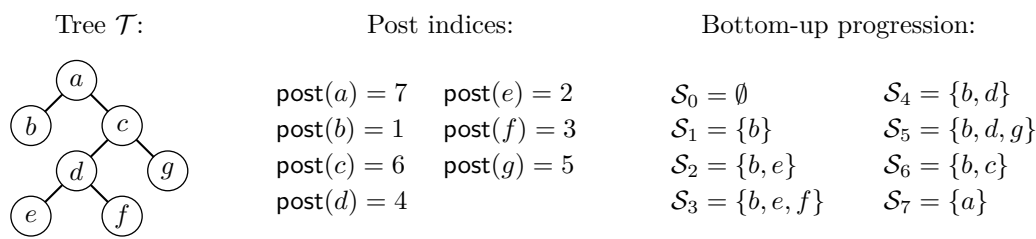
\begin{figure}[t]
\centering
\begin{tikzpicture}[scale=0.4,>=stealth]
\SetGraphUnit{1.5}
\begin{scope}[VertexStyle/.append style = {minimum size = 15pt, inner sep = 0pt}]
\Vertex[x=0.25,y=0,L={$a$}]{a}
\SOWE[L={$b$}](a){b}
\SOEA[L={$c$}](a){c}
\SOEA[L={$g$}](c){d}
\SOWE[L={$d$}](c){e}
\SOWE[L={$e$}](e){f}
\SOEA[L={$f$}](e){g}
\end{scope}

\node[anchor=north] at (1,2.5) {Tree $\calT$:};

\Edge[style={thick}](a)(b)
\Edge[style={thick}](a)(c)
\Edge[style={thick}](c)(d)
\Edge[style={thick}](c)(e)
\Edge[style={thick}](e)(f)
\Edge[style={thick}](e)(g)
\node[anchor=north west] at (9.5,2.5) {Post indices:};
\node[anchor=south west] at (6.5,-1.2) {$\posti(a) = 7$};
\node[anchor=south west] at (6.5,-2.4) {$\posti(b) = 1$};
\node[anchor=south west] at (6.5,-3.6) {$\posti(c) = 6$};
\node[anchor=south west] at (6.5,-4.8) {$\posti(d) = 4$};
\node[anchor=south west] at (12,-1.2) {$\posti(e) = 2$};
\node[anchor=south west] at (12,-2.4) {$\posti(f) = 3$};
\node[anchor=south west] at (12,-3.6) {$\posti(g) = 5$};

\node[anchor=north west] at (20.5,2.5) {Bottom-up progression:};
\node[anchor=south west] at (19.5,-1.2) {$\calS_0 = \emptyset$};
\node[anchor=south west] at (19.5,-2.4) {$\calS_1 = \{b\}$};
\node[anchor=south west] at (19.5,-3.6) {$\calS_2 = \{b,e\}$};
\node[anchor=south west] at (19.5,-4.8) {$\calS_3 = \{b,e,f\}$};

\node[anchor=south west] at (26.5,-1.2) {$\calS_4 = \{b,d\}$};
\node[anchor=south west] at (26.5,-2.4) {$\calS_5 = \{b,d,g\}$};
\node[anchor=south west] at (26.5,-3.6) {$\calS_6 = \{b,c\}$};
\node[anchor=south west] at (26.5,-4.8) {$\calS_7 = \{a\}$};
\end{tikzpicture}
\caption{Ordered tree, depth-first traversal and bottom-up progression}
\label{fig:etendue-1}
\end{figure}

When $\calT$ is binary and has height $h$, then the bottom-up
progression enjoys some conciseness and smoothness
properties, which we state below.

\begin{lem}
\label{lem:progression}
Let $\calT$ be a binary tree with $N$ nodes, of height $h$, and
let $(\calS_i)_{0 \leqslant i \leqslant N}$ be the bottom-up progression of
$\calT$. For all $i \geqslant 1$, it holds that:
\begin{itemize}
 \item the set $\calS_i$ is of cardinality $2 h$ or less;
 \item if the node $n_i$ is a leaf, then $\calS_i = \calS_{i-1} \cup \{n_i\}$;
 \item if the node $n_i$ is an internal node, with children $m_1$ and $m_2$, then
 both $m_1$ and $m_2$ belong to $\calS_{i-1}$, and $\calS_i = \calS_{i-1} \setminus \{m_1,m_2\} \cup \{n_i\}$.
\end{itemize}
\end{lem}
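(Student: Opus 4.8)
The plan is to recast the definition of $\calS_i$ in a more usable form and then read off all three statements. The crucial feature of the post order is that a node is visited strictly after all of its descendants: if $n'$ is a descendant of $n$, then $n$ is a strict ancestor of $n'$, so $n' \prec n$ and hence $\posti(n') < \posti(n)$. Consequently $\posti(n) \leqslant i$ already forces $\posti(n') \leqslant i$ for every descendant $n'$ of $n$. From this I would observe that the condition ``$\posti(m) > i$ for all strict ancestors $m$ of $n$'' is equivalent to the single requirement that the parent of $n$ (if it has one) satisfies $\posti > i$, since any ancestor lying above the parent is a strict ancestor of the parent and thus has an even larger index. Calling a node \emph{visited at step $i$} when $\posti(n) \leqslant i$, this shows that $\calS_i$ is exactly the set of visited nodes whose parent is not visited, i.e.\ the set of roots of the maximal subtrees all of whose nodes are visited at step $i$.

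With this viewpoint the two smoothness statements are almost immediate, because passing from step $i-1$ to step $i$ visits exactly one new node, namely $n_i$. A node $v$ lies in $\calS_j$ iff $v$ is visited at step $j$ while its parent is not, so the membership of $v$ can change between $i-1$ and $i$ only if $v = n_i$ (its own status changed) or the parent of $v$ is $n_i$ (i.e.\ $v$ is a child of $n_i$). If $n_i$ is a leaf it has no children, so $n_i$ enters $\calS$ and nothing else moves, giving $\calS_i = \calS_{i-1} \cup \{n_i\}$. If $n_i$ is internal with children $m_1, m_2$, then all descendants of $n_i$, in particular $m_1$ and $m_2$, have post index $< i$ and hence are visited at step $i-1$, while their parent $n_i$ is not; thus $m_1, m_2 \in \calS_{i-1}$. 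At step $i$ the node $n_i$ enters $\calS$ and $m_1, m_2$ leave it (their parent is now visited), yielding $\calS_i = (\calS_{i-1} \setminus \{m_1, m_2\}) \cup \{n_i\}$.

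The cardinality bound is the one part requiring a genuine structural argument, and is where I expect the only real work. I would show that every $v \in \calS_i$ is either $n_i$ itself or the sibling of a node lying on the path from the root to $n_i$. For $v \neq n_i$, the facts that the whole subtree of $v$ is visited while $n_i$ is the most recently visited node force $v$ and $n_i$ to be incomparable; writing $w$ for their lowest common ancestor, the ``left subtree before right subtree'' rule of the post order shows that $v$ must lie in the left subtree of $w$ and $n_i$ in the right one (otherwise $v$ would not yet be visited), and maximality of $v$'s visited subtree forces $v$ to be precisely the left child of $w$, which is the sibling of the child of $w$ on the root-to-$n_i$ path. Hence $\calS_i$ is contained in the set of nodes on that path together with their siblings; as the path has at most $h$ nodes and the tree is binary, this set has at most $2h$ elements, so $|\calS_i| \leqslant 2h$. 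The main obstacle is exactly this confinement: for a general antichain of a binary tree the estimate is hopeless (the leaves alone may number $2^h$), so one really must exploit the post-order structure to pin $\calS_i$ to the neighbourhood of a single root-to-node path.
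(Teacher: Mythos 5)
Your proof is correct. The treatment of the second and third bullets is essentially the paper's own argument: both exploit the fact that in post order all descendants of $n_i$ have index $<i$ and all strict ancestors have index $>i$, so that $\calS_i\setminus\calS_{i-1}=\{n_i\}$ and $\calS_{i-1}\setminus\calS_i$ consists exactly of the children of $n_i$ (your reformulation of $\calS_i$ as ``visited nodes whose parent is unvisited'' is just a convenient repackaging of this). Where you genuinely diverge is the cardinality bound. The paper counts \emph{per level}: it shows that if $x\prec x'$ are two elements of $\calS_i$ at the same depth, then, $y$ denoting the parent of $x$, the chain $\posti(x)<\posti(x')\leqslant i<\posti(y)$ forces $x'$ to be the right sibling of $x$; hence at most two elements per level and $|\calS_i|\leqslant 2h$. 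You instead confine $\calS_i$ \emph{geometrically}: every $v\in\calS_i$ other than $n_i$ is incomparable with $n_i$ and, by the left-before-right rule applied at the lowest common ancestor, must be the left child of a node on the root-to-$n_i$ path. Both arguments are sound and of comparable length; yours actually yields a sharper structural fact (the elements of $\calS_i$ are $n_i$ together with left siblings of path nodes, which gives a bound closer to $h$ than to $2h$), while the paper's level-counting argument is more symmetric and avoids discussing lowest common ancestors. Your closing remark---that the bound cannot follow from antichain considerations alone and must use the post order---correctly identifies the crux.
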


\begin{proof}
First, since $i < \posti(m)$ for all strict ancestors $m$ of $n_i$, it
comes at once that $\calS_i \setminus \calS_{i-1} = \{n_i\}$.
Furthermore, a node $n$ belongs to $\calS_{i-1} \setminus \calS_i$ if,
and only~if, $\posti(n) < i$, $n_i$ is an ancestor of~$n$, and
$\posti(m) \geqslant i$ for all strict ancestors~$m$ of~$n$.  Since we have
$\posti(x) < i < \posti(y)$ for all strict descendants~$x$ and all
strict ancestors~$y$ of~$n_i$, it~follows that
$\calS_{i-1} \setminus \calS_i$ consists of the children of~$n_i$ only
(if~they exist).

Finally, assume that there exist two nodes $x \prec x'$ in $\calS_i$ that
are at the same height, and let~$y$ be the parent of~$x$.
By definition, we know that $\posti(x) < \posti(x') \leqslant i < \posti(y)$,
hence $x'$ belongs to the right subtree of~$y$, i.e., $x'$ is the right sibling of~$x$.
It~follows that $\calS_i$ contains at most two nodes at each level,
whence $|\calS_i| \leqslant 2 h$.
\end{proof}

The above notion of bottom-up progression also leads to the notion
of \emph{state progression}.
\begin{dfn}
Let $\calT$ be a labeled binary ordered tree with $N$ nodes,
let $(\calS_i)_{0 \leqslant i \leqslant N}$
be its bottom-up progression, and let $\calA$ be a (deterministic, bottom-up)
automaton.
Let~$\rho$ be the (unique) run of~$\calA$ over~$\calT$.
We~call \emph{state progression} of~$\calA$ on~$\calT$ the
sequence $(\rho\!\restriction_{\calS_i})_{0 \leqslant i \leqslant N}$,
where $\rho\!\restriction_{\calS_i}$ denotes the restriction of~$\rho$ to
the domain~$\calS_i$.
\end{dfn}

Figure~\ref{fig:etendue-1} presented a tree~$\calT$ and a bottom-up
progression of~$\calT$.  Figure~\ref{fig:etendue-2} presents a tree
automaton~$\calA$, a~labeling of~$\calT$, the~(rejecting) run~$\rho$
of~$\calA$ on~$\calT$ and the associated state progression.  We~omit
to represent the restriction $\rho\!\restriction_{\calS_0}$ since
$\calS_0$ is empty.

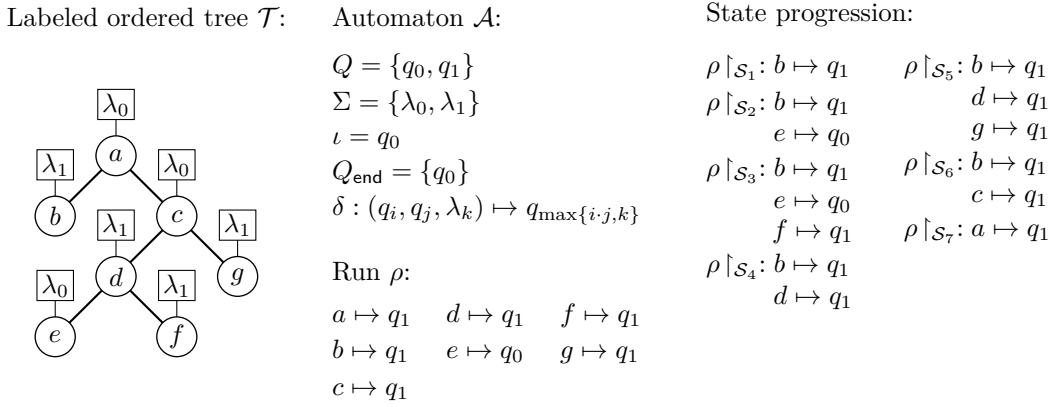
\begin{figure}[t]
\begin{center}
\begin{tikzpicture}[scale=0.4,>=stealth]
\SetGraphUnit{2}
\begin{scope}[xshift=2mm]
\begin{scope}[VertexStyle/.append style = {minimum size = 15pt, inner sep = 0pt}]

\putlabel{0}{-1}{0}
\putlabel{-2}{-3}{1}
\putlabel{2}{-3}{0}
\putlabel{0}{-5}{1}
\putlabel{4}{-5}{1}
\putlabel{-2}{-7}{0}
\putlabel{2}{-7}{1}

\Vertex[x=0,y=-1,L={$a$}]{a}
\SOWE[L={$b$}](a){b}
\SOEA[L={$c$}](a){c}
\SOEA[L={$g$}](c){d}
\SOWE[L={$d$}](c){e}
\SOWE[L={$e$}](e){f}
\SOEA[L={$f$}](e){g}
\end{scope}

\node[anchor=south] at (1,3) {Labeled ordered tree $\calT$:};
\end{scope}

\Edge[style={thick}](a)(b)
\Edge[style={thick}](a)(c)
\Edge[style={thick}](c)(d)
\Edge[style={thick}](c)(e)
\Edge[style={thick}](e)(f)
\Edge[style={thick}](e)(g)

\node[anchor=south west] at (7,3) {Automaton $\calA$:};
\node[anchor=south west] at (7,1.2) {$Q = \{q_0,q_1\}$};
\node[anchor=south west] at (7,0) {$\Sigma = \{\lambda_0,\lambda_1\}$};
\node[anchor=south west] at (7,-1.1) {$\iota = q_0$};
\node[anchor=south west] at (7,-2.3) {$Q_{\END} = \{q_0\}$};
\node[anchor=south west] at (7,-3.6) {$\delta : (q_i,q_j,\lambda_k) \mapsto q_{\max\{i \cdot j,k\}}$};

\node[anchor=south west] at (7,-5.6) {Run $\rho$:};
\node[anchor=south west] at (7,-7) {$a \mapsto q_1$};
\node[anchor=south west] at (7,-8.2) {$b \mapsto q_1$};
\node[anchor=south west] at (7,-9.4) {$c \mapsto q_1$};
\node[anchor=south west] at (10.75,-7) {$d \mapsto q_1$};
\node[anchor=south west] at (10.75,-8.2) {$e \mapsto q_0$};
\node[anchor=south west] at (14.5,-7) {$f \mapsto q_1$};
\node[anchor=south west] at (14.5,-8.2) {$g \mapsto q_1$};

\begin{scope}[xshift=-2mm]
\node[anchor=south west] at (19.5,3) {State progression:};
\node[anchor=south west] at (19.5,1.2) {$\rho\!\restriction_{\calS_1} : b \mapsto q_1$};
\node[anchor=south west] at (19.5,0) {$\rho\!\restriction_{\calS_2} : b \mapsto q_1$};
\node[anchor=south west] at (21.72,-1) {$e \mapsto q_0$};
\node[anchor=south west] at (19.5,-2.2) {$\rho\!\restriction_{\calS_3} : b \mapsto q_1$};
\node[anchor=south west] at (21.72,-3.2) {$e \mapsto q_0$};
\node[anchor=south west] at (21.67,-4.2) {$f \mapsto q_1$};
\node[anchor=south west] at (19.5,-5.4) {$\rho\!\restriction_{\calS_4} : b \mapsto q_1$};
\node[anchor=south west] at (21.72,-6.4) {$d \mapsto q_1$};

\node[anchor=south west] at (26,1.2) {$\rho\!\restriction_{\calS_5} : b \mapsto q_1$};
\node[anchor=south west] at (28.22,0.2) {$d \mapsto q_1$};
\node[anchor=south west] at (28.22,-0.8) {$g \mapsto q_1$};
\node[anchor=south west] at (26,-2) {$\rho\!\restriction_{\calS_6} : b \mapsto q_1$};
\node[anchor=south west] at (28.22,-3) {$c \mapsto q_1$};
\node[anchor=south west] at (26,-4.2) {$\rho\!\restriction_{\calS_7} : a \mapsto q_1$};
\end{scope}
\end{tikzpicture}
\end{center}
\caption{Labeled tree, tree automaton, run and associated state progression}
\label{fig:etendue-2}
\end{figure}

\label{local-rules} For all $i \geqslant 1$, recall that
Lemma~\ref{lem:progression} states that there exists a unique node
${n_i\in\calS_i \setminus \calS_{i-1}}$, and that either $n_i$~is a
leaf or both of its children belong to~$\calS_{i-1}$.  The~functions
$\rho\!\restriction_{\calS_i}$ can therefore be computed in a
step-wise manner once the automaton $\calA =
\tuple{Q,\Sigma,\iota,Q_{\END},\delta}$ is fixed.  More precisely,
and denoting by $\lambda$ the labeling function of $\calT$, we have:
\begin{itemize}
 \item if $i = 0$, then $\calS_0 = \emptyset$, hence
   $\rho\!\restriction_{\calS_0}$ is the empty-domain function;
 \item if $1 \leqslant i$ and $n_i$ is a leaf, then
   $\rho\!\restriction_{\calS_i} =
   \Pi_i(\rho\!\restriction_{\calS_{i-1}},\lambda(n_i))$, where
   \[
   \Pi_i(\varphi,\gamma) \colon n\in\calS_i \mapsto \begin{cases} \varphi(n) &
     \text{if $n \in \calS_{i-1}$;} \\ \delta(\iota,\iota,\gamma) &
     \text{if $n = n_i$;}\end{cases}
   \]
 \item if $1 \leqslant i$ and $n_i$ is an internal node with children $m_1$
   and $m_2$, then $\rho\!\restriction_{\calS_i} =
   \Pi_i(\rho\!\restriction_{\calS_{i-1}},\lambda(n_i))$, where
   \[
   \Pi_i(\varphi,\gamma) \colon n\in \calS_i \mapsto \begin{cases}
     \varphi(n) & \text{if $n \in \calS_{i-1}$;}
     \\ \delta(\varphi(m_1),\varphi(m_2),\gamma) & \text{if $n =
       n_i$.}\end{cases}
   \]
\end{itemize}
We will rely on this step-wise computation in the following section.

\subsection{Reduction to the Dyck reachability problem}
\label{subsec:reductiontodyck-2}

In this section, we~present the reduction of
$\Dyn\GraphMSO_{\kappa,\varphi}$ to a Dyck reachability problem on an
acyclic labeled graph. Our reduction is such that any update (of the
edges) in the input graph corresponds to a simple update of the acyclic graph.
As we explain in Section~\ref{section:overall}, this reduction proves
Theorem~\ref{mainresult}.

\subsubsection{The Dyck reachability problem in acyclic graphs}

Before presenting our reduction, we first define Dyck reachability
problems, then recall briefly some results about their dynamic
complexity in the case of acyclic graphs: in such graphs, context-free
graph queries, and therefore Dyck reachability problems, belong to the
dynamic complexity class \Dyn\FO~\cite{MVZ15,WS07}.

\begin{dfn}
  Let $G = \tuple{V,E,L}$ be a labeled graph, with set of labels $L$,
  and with edge set $E\subseteq V^2\times L$. Let $v_1$ and~$v_2$ be
  two marked vertices of~$G$.  We assume that $L$~can be partitioned
  as $L = L^+ \uplus L^- \uplus \{\bullet\}$, where
  $L^+$ and $L^-$ are in bijection with each other,
  and $\bullet$~is a fresh ``neutral''
  label symbol, and that a bijection $\lambda^+ \mapsto
  \lambda^-$ from $L^+$ to $L^-$ is given.

  The labeling on edges induces in a direct way a labeling on paths
  in~$G$. The \emph{Dyck reachability problem} asks whether there
  exists a path~$\pi$ (in~the graph~$G$) from $v_1$ to~$v_2$, such that
  $\pi$ is labeled with a string in the language~$\bfD$ of \emph{Dyck
    words} built over the grammar: $S \to \varepsilon \mid S \cdot
  \bullet \cdot S \mid S \cdot \lambda^+ \cdot S \cdot \lambda^- \cdot S$,
  where $\lambda^+$ ranges over the set~$L^+$.
\end{dfn}

While~\cite{WS07} assumed a constant-size label set (which is not
  the case here), the result of~\cite{WS07} can be generalized, as
  stated below (see~\cite{arxiv} for a proof).

\begin{lem}
\label{lem:Dyck}
The Dyck reachability problem in acyclic graphs is solvable
in \Dyn\FO (under the assumption that
updates consist in adding or deleting individual labeled edges),
using as only auxiliary predicate a $4$-ary predicate
$\Delta(x_1,y_1,x_2,y_2)$ defined by:
\begin{quote}
``There exists a path $\varpi_1$ from
$x_1$ to $y_1$ with a label $\lambda_1$ and there exists a path $\varpi_2$
from $x_2$ to $y_2$ with a label $\lambda_2$ such that
$\lambda_1 \cdot \lambda_2$ is a Dyck word''.
\end{quote}
\end{lem}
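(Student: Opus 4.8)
The plan is to instantiate the definition of $\Dyn\FO$ directly, keeping $\Delta$ as the unique auxiliary relation and exhibiting an FO decision formula, a trivial initialisation, and FO update formulas for $\mathsf{ins}(e)$ and $\mathsf{del}(e)$. For the decision, I use acyclicity: the only walk from a vertex to itself is the empty one, whose label $\varepsilon$ lies in $\bfD$. Hence there is a Dyck path from the marked $v_1$ to the marked $v_2$ if and only if $\Delta(v_1,v_2,v_2,v_2)$ holds, taking the second witnessing path to be the empty walk at $v_2$. For the initialisation, on the edgeless graph every witness is built from empty walks only; if I agree to store in $\Delta$ exactly those witnesses that use at least one edge, then the edgeless graph maps to the empty auxiliary relation (the required empty-to-empty condition for $\Dyn\FO$), and the constant ``diagonal'' contributions of empty sub-walks are folded, as FO equalities, into the decision and update formulas.

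The core of the argument is the update for $\mathsf{ins}(e)$ with $e=(a,b,\ell)$. I would first prove the structural fact that, since the graph stays acyclic, any walk crosses a fixed edge at most once (a second crossing would close a cycle through $b \rightsquigarrow a$). Thus each witnessing path $\varpi_1,\varpi_2$ either avoids $e$ or crosses it exactly once, yielding a bounded case analysis. In the new cases I split the crossing path, say $\varpi_1 = \alpha \cdot e \cdot \beta$ with $\alpha\colon x_1 \to a$ and $\beta\colon b \to y_1$, and expand the Dyck-ness of the concatenated label via the grammar $S \to \varepsilon \mid S\bullet S \mid S\lambda^+ S\lambda^- S$. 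If $\ell=\bullet$, this forces a Dyck label on $\alpha$ and on $\beta\cdot\varpi_2$, i.e.\ $\Delta(x_1,a,a,a) \wedge \Delta(b,y_1,x_2,y_2)$ over the old relation. If $\ell=\lambda^+$, the inserted symbol must be matched by a later $\lambda^-$; quantifying existentially over the endpoints $c,d$ of that matching edge with $E(c,d,\lambda^-)$, the match lies either in $\beta$, giving $\Delta(x_1,a,a,a)\wedge\Delta(b,c,c,c)\wedge\Delta(d,y_1,x_2,y_2)$, or in $\varpi_2$, giving $\Delta(x_1,a,a,a)\wedge\Delta(b,y_1,x_2,c)\wedge\Delta(d,y_2,y_2,y_2)$. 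This last subcase, where the prefix of $\varpi_2$ up to the match must combine with $\beta$ into a Dyck word, is a genuine two-path query and cannot be phrased with single-path reachability; it is exactly what forces the auxiliary predicate to be $4$-ary. The symmetric situations ($\ell\in L^-$, $e$ inside $\varpi_2$, or $e$ crossed by both paths) are handled identically, and assembling the finitely many cases yields a fixed FO formula defining the new $\Delta$ from the old one and the updated edge relation.

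The main obstacle is the update for $\mathsf{del}(e)$. Writing $\mathrm{Ins}_e$ for the FO operator above, acyclicity gives the identity $\Delta_H = \mathrm{Ins}_e(\Delta_{H\setminus e})$, because every $e$-using witness splits at $e$ into two $e$-free sub-witnesses already recorded by $\Delta_{H\setminus e}$. Deleting $e$ therefore amounts to recovering $\Delta_{H\setminus e}$ from $\Delta_H$, and the difficulty is that $\mathrm{Ins}_e$ is only monotone, so I cannot simply ``subtract'': a tuple may belong to $\Delta_H$ \emph{only} through witnesses that traverse $e$, and for such a tuple the new value must be $0$. The plan, following \cite{WS07}, is to show that in the acyclic setting the predicate ``there exists an $e$-avoiding witness for $(x_1,y_1,x_2,y_2)$'' is itself FO-definable from $\Delta_H$ and $E$: one guesses the shape of a hypothetical witness relative to $e$ and re-expresses each of its two halves as $\Delta_H$-queries whose chosen endpoints structurally forbid re-using $e$ (again exploiting single crossing, and using the two-path coupling to reroute, across the pair, the bracket that would otherwise be matched through $e$). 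Proving that this re-expression is sound and complete --- that it certifies precisely the tuples retaining an $e$-free witness, and in particular correctly drops those whose only witnesses route through $e$ --- is the technical heart of the proof. Once both update operators are established in FO, the decision formula, the empty-to-empty initialisation, and these updates together witness membership in $\Dyn\FO$, proving the lemma.
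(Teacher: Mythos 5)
The paper never proves this lemma in-text (it defers to the generalization of \cite{WS07} worked out in \cite{arxiv}), so your attempt must stand on its own --- and it does not: your insertion update rests on a false decomposition of Dyck words. In each ``crossing'' case you require the segment $\alpha$ \emph{preceding} the inserted edge to carry a Dyck label (the conjunct $\Delta(x_1,a,a,a)$), arguing that the grammar forces this. It does not: the productions $S \to S\bullet S$ and $S \to S\lambda^+S\lambda^-S$ can be applied \emph{inside} an enclosing bracket pair, so brackets opened in $\alpha$ may be closed only after the inserted edge, after its matching partner, or inside $\varpi_2$. Concretely, let the old graph have edges $x_1 \xrightarrow{\mu^+} a$ and $b \xrightarrow{\mu^-} y_1$ and insert $e = a \xrightarrow{\bullet} b$: the new path $x_1 \to a \to b \to y_1$ has label $\mu^+ \cdot \bullet \cdot \mu^-$, a Dyck word, yet $\Delta(x_1,a,a,a)$ is false (the only path $x_1 \to a$ has label $\mu^+$), so your formula misses this witness; similarly, with old edges $x_1 \xrightarrow{\mu^+} a$, $b \xrightarrow{\lambda^-} c$, $c \xrightarrow{\mu^-} y_1$ and inserted label $\lambda^+$, the witness $\mu^+\lambda^+\lambda^-\mu^-$ is missed. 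Your operator $\mathrm{Ins}_e$ is therefore sound but not complete, the maintained relation drifts strictly below the true $\Delta$, and the decision formula subsequently gives wrong answers. The straddling witnesses you drop require coupling \emph{three} segments ($\alpha$, the portion after the matching bracket, and $\varpi_2$) in a single Dyck condition; expressing such three-path conditions through the two-path, $4$-ary $\Delta$ is exactly the nontrivial content of the lemma, and your per-segment conjunctions of $\Delta$-atoms do not do it. Indeed, your own observation that two-path coupling ``is what forces the auxiliary predicate to be $4$-ary'' should have warned you that inserting an edge inside $\varpi_1$ creates a three-way coupling that a conjunction of independent $\Delta$-queries cannot capture.

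The deletion case compounds the problem. You correctly identify non-monotonicity as the crux (a tuple may survive in $\Delta_H$ only through witnesses routed via $e$), but what you offer is a plan, not a proof: the claim that ``there exists an $e$-avoiding witness'' is \FO-definable from $\Delta_H$ and $E$ is asserted via ``guessing shapes'' and ``rerouting brackets'' with no formula and no soundness/completeness argument, and the identity $\Delta_H = \mathrm{Ins}_e(\Delta_{H\setminus e})$ on which the plan rests already invokes the flawed insertion operator. Since this is precisely where the actual proofs (in \cite{WS07} for constant label sets, extended in \cite{arxiv}) do their real work, the lemma remains unproved: you would need a correct, complete case analysis of how bracket matchings distribute across the inserted/deleted edge and both witness paths, together with an argument that the resulting (three- and four-segment) conditions collapse back to \FO formulas over the $4$-ary $\Delta$.
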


\subsubsection{Reduction}

We fix an \MSO formula $\varphi$, and let~$\calA_{\varphi,4\kappa+3} =
\tuple{Q,\Sigma,\iota,Q_{\END},\delta}$, which we simply name~$\calA$ in the rest of this section, be the (deterministic,
bottom-up) tree automaton of Theorem~\ref{thm:courcelle}.

We now describe our
transformation of any subgraph~$G$ of~$G_\star = \tuple{V,E_\star}$
into an acyclic labeled graph~$\Gamma_G$ for the Dyck reachability
problem.
Let $\calD_\star = \tuple{\calT_\star,\mathbf{T}_{\star}}$, where
$\calT_\star = \tuple{\calN_\star,\calE_\star}$ be a
tree decomposition of~$G_\star$ of width $4\kappa+3$, as defined in
Lemma~\ref{lemma:puredecomposition},
and let~$\calN_\star$ be the set of nodes of~$\calT_\star$.  
In~addition, let~$\chi$ be a $\calD_\star$-coloring function of~$V$,
and let $(\calS_i)_{0 \leqslant i \leqslant N}$ be the bottom-up
progression of~$\calT_\star$.

Let~$G = \tuple{V,E}$ be a subgraph of~$G_\star$.  Observe that
$\calD_\star$ is also a tree decomposition of~$G$, since $E
  \subseteq E_\star$.  Hence, there exists a labeling function
$\Lambda_G \colon \calN_\star \to \Sigma$ that
identifies~$\calT_\star$ with a $(\chi,\calD_\star)$-succinct binary tree
decomposition~$\calT_G$ of~$G$.  In~particular, $(\calS_i)_{0 \leqslant i
  \leqslant N}$ is also the bottom-up progression of~$\calT_G$.
Let~$\rho_G$ be the run of~$\calA$ over~$\calT_G$.  We~want
to construct a graph~$\Gamma_G$ in order to identify the state
progression $(\rho_G\!\restriction_{\calS_i})_{0 \leqslant i \leqslant N}$
with a (Dyck) path in~$\Gamma_G$.

\subparagraph*{A~naive construction.}
As a first try, we~let the
vertices of this graph be all
  pairs $(i,\pi)$, where $0 \le i \le N$, and $\pi\colon \calS_i \to
  Q$ is intended to represent the state progression at
  step~$i$. Following the local computation described
  page~\pageref{local-rules}, we include an edge $(i-1,\pi) \to
  (i,\pi')$ when $\pi' = \Pi_i(\pi,\Lambda_G(n_i))$, where $n_i$ is
  the unique node in $\calS_i \setminus \calS_{i-1}$. Then, obviously,
  if $\pi_{\INIT}$ is the unique function from~$\emptyset$ to~$Q$, the
  path $\rho_G$ is accepting if and only if, in this naive graph, the
  unique path from $(0,\pi_{\INIT})$ to $(N,\pi)$ is such that $\pi$
  maps the~unique element of~$\calS_N$ (namely, the~root of the tree)
  onto~$Q_\END$.

  While this naive construction is correct, it is not
  suitable in a dynamic complexity perspective: adding or removing an edge
  in~$G$ may affect many edges in the
  above graph. However, we~show below that it can only affect edges 
  at a single level (index~$i$);
  using Dyck constraints, we~then adapt the construction above to
  have updates of~$G$ only impact one edge of our graph.

  The idea is illustrated on Figure~\ref{fig:cluster}.  Assume that
  both upper (naive) graphs represent a parameterized function~$f$
  with parameters~$\gamma$~(left) and~$\gamma'$~(right): there is an
  edge~$\alpha_x \to \beta_y$ in the left graph whenever $\beta_y =
  f(\alpha_x,\gamma)$, and an edge $\alpha_x \to \beta_y$ in the right
  graph whenever $\beta_y = f(\alpha_x,\gamma')$.  Replacing the value
  of~$\gamma$ with~$\gamma'$, i.e., transforming the left graph into the
  right~one, requires many edge deletions and insertions.
  
  \looseness=-1
  We circumvent this problem by using Dyck paths: the~value of
  $f(\cdot,\gamma)$ is computed thanks to the Dyck path labeled with
  $\alpha_x^+ \cdot \bullet \cdot \alpha_x^-$, which goes from
  $\alpha_x$ to $\beta_y = f(\alpha_x,\gamma)$.  Hence, changing the
  value of~$\gamma$ to~$\gamma'$ amounts to replacing the edge $\circ
  \xrightarrow{\bullet} \gamma$ with the new edge $\circ
  \xrightarrow{\bullet} \gamma'$.

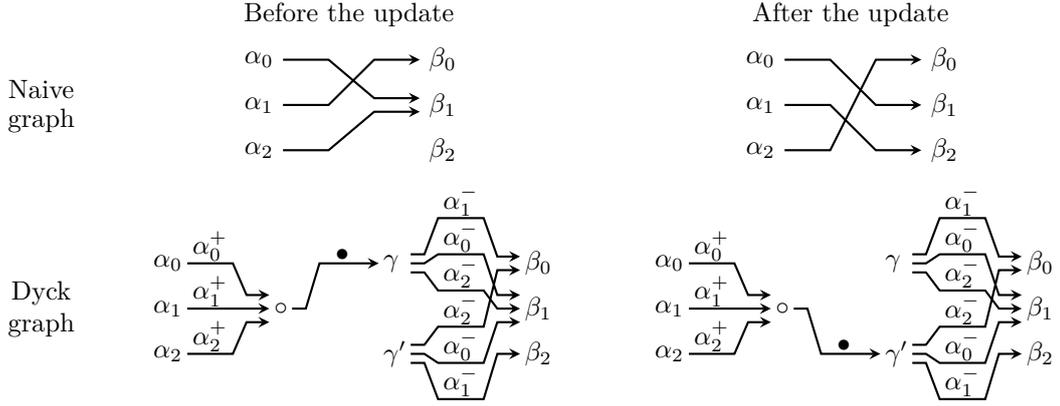
\begin{figure}[t]
  \begin{center}
    \begin{tikzpicture}[scale=0.6,>=stealth]      
      \node at (-3,5.85) {Naive};
      \node at (-3,5.15) {graph};
      
      \node[anchor=east] at (2.3,6.5) {$\alpha_0$};
      \node[anchor=east] at (2.3,5.5) {$\alpha_1$};
      \node[anchor=east] at (2.3,4.5) {$\alpha_2$};
      \draw[thick,->] (2.3,6.5) -- (3.3,6.5) -- (4.3,5.65) -- (5.3,5.65);
      \draw[thick,->] (2.3,5.5) -- (3.3,5.5) -- (4.3,6.5) -- (5.3,6.5);
      \draw[thick,->] (2.3,4.5) -- (3.3,4.5) -- (4.3,5.35) -- (5.3,5.35);
      \node[anchor=west] at (5.3,6.5) {$\beta_0$};
      \node[anchor=west] at (5.3,5.5) {$\beta_1$};
      \node[anchor=west] at (5.3,4.5) {$\beta_2$};
      
      \node[anchor=east] at (13.3,6.5) {$\alpha_0$};
      \node[anchor=east] at (13.3,5.5) {$\alpha_1$};
      \node[anchor=east] at (13.3,4.5) {$\alpha_2$};
      \draw[thick,->] (13.3,6.5) -- (14.3,6.5) -- (15.3,5.5) -- (16.3,5.5);
      \draw[thick,->] (13.3,5.5) -- (14.3,5.5) -- (15.3,4.5) -- (16.3,4.5);
      \draw[thick,->] (13.3,4.5) -- (14.3,4.5) -- (15.3,6.5) -- (16.3,6.5);
      \node[anchor=west] at (16.3,6.5) {$\beta_0$};
      \node[anchor=west] at (16.3,5.5) {$\beta_1$};
      \node[anchor=west] at (16.3,4.5) {$\beta_2$};
      
      \node at (-3,1.35) {Dyck};
      \node at (-3,0.65) {graph};

      \node[anchor=east] at (0.3,2) {$\alpha_0$};
      \node[anchor=east] at (0.3,1) {$\alpha_1$};
      \node[anchor=east] at (0.3,0) {$\alpha_2$};
      
      \draw[thick,->] (0.2,2) -- (1.2,2) -- (1.5,1.3) -- (2,1.3);
      \draw[thick,->] (0.2,1) -- (2,1);
      \draw[thick,->] (0.2,0) -- (1.2,0) -- (1.5,0.7) -- (2,0.7);
      \node[anchor=south] at (0.7,1.85) {$\alpha_0^+$};
      \node[anchor=south] at (0.7,0.85) {$\alpha_1^+$};
      \node[anchor=south] at (0.7,-0.15) {$\alpha_2^+$};
      
      \node[anchor=west] at (1.9,1) {$\circ$};
      \draw[thick,->] (2.5,1) -- (2.8,1) -- (3.1,2) -- (4.4,2);
      \node[anchor=south] at (3.6,1.85) {$\bullet$};
      
      \node[anchor=west] at (4.3,2) {$\gamma$};
      \node[anchor=west] at (4.3,0) {$\gamma'$};
      
      \draw[thick,->] (5.1,2.2) -- (5.4,2.2) -- (5.7,3) -- (6.7,3) -- (7,2.15) -- (7.5,2.15);
      \draw[thick,->] (5.1,2) -- (5.4,2) -- (5.7,2.2) -- (6.7,2.2) -- (7,1.3) -- (7.5,1.3);
      \draw[thick,->] (5.1,1.8) -- (5.4,1.8) -- (5.7,1.4) -- (6.7,1.4) -- (7,1) -- (7.5,1);
      \draw[thick,->] (5.1,0.2) -- (5.4,0.2) -- (5.7,0.6) -- (6.7,0.6) -- (7,1.85) -- (7.5,1.85);
      \draw[thick,->] (5.1,0) -- (5.4,0) -- (5.7,-0.2) -- (6.7,-0.2) -- (7,0.7) -- (7.5,0.7);
      \draw[thick,->] (5.1,-0.2) -- (5.4,-0.2) -- (5.7,-1) -- (6.7,-1) -- (7,0) -- (7.5,0);
      \node[anchor=south] at (6.2,2.85) {$\alpha_1^-$};
      \node[anchor=south] at (6.2,2.05) {$\alpha_0^-$};
      \node[anchor=south] at (6.2,1.25) {$\alpha_2^-$};
      \node[anchor=south] at (6.2,0.45) {$\alpha_2^-$};
      \node[anchor=south] at (6.2,-0.35) {$\alpha_0^-$};
      \node[anchor=south] at (6.2,-1.15) {$\alpha_1^-$};
      
      \node[anchor=west] at (7.4,2) {$\beta_0$};
      \node[anchor=west] at (7.4,1) {$\beta_1$};
      \node[anchor=west] at (7.4,0) {$\beta_2$};
      
      \node[anchor=east] at (11.3,2) {$\alpha_0$};
      \node[anchor=east] at (11.3,1) {$\alpha_1$};
      \node[anchor=east] at (11.3,0) {$\alpha_2$};
      
      \draw[thick,->] (11.2,2) -- (12.2,2) -- (12.5,1.3) -- (13,1.3);
      \draw[thick,->] (11.2,1) -- (13,1);
      \draw[thick,->] (11.2,0) -- (12.2,0) -- (12.5,0.7) -- (13,0.7);
      \node[anchor=south] at (11.7,1.85) {$\alpha_0^+$};
      \node[anchor=south] at (11.7,0.85) {$\alpha_1^+$};
      \node[anchor=south] at (11.7,-0.15) {$\alpha_2^+$};
      
      \node[anchor=west] at (12.9,1) {$\circ$};
      \draw[thick,->] (13.5,1) -- (13.8,1) -- (14.1,0) -- (15.4,0);
      \node[anchor=south] at (14.6,-0.15) {$\bullet$};
      
      \node[anchor=west] at (15.3,2) {$\gamma$};
      \node[anchor=west] at (15.3,0) {$\gamma'$};
      
      \draw[thick,->] (16.1,2.2) -- (16.4,2.2) -- (16.7,3) -- (17.7,3) -- (18,2.15) -- (18.5,2.15);
      \draw[thick,->] (16.1,2) -- (16.4,2) -- (16.7,2.2) -- (17.7,2.2) -- (18,1.3) -- (18.5,1.3);
      \draw[thick,->] (16.1,1.8) -- (16.4,1.8) -- (16.7,1.4) -- (17.7,1.4) -- (18,1) -- (18.5,1);
      \draw[thick,->] (16.1,0.2) -- (16.4,0.2) -- (16.7,0.6) -- (17.7,0.6) -- (18,1.85) -- (18.5,1.85);
      \draw[thick,->] (16.1,0) -- (16.4,0) -- (16.7,-0.2) -- (17.7,-0.2) -- (18,0.7) -- (18.5,0.7);
      \draw[thick,->] (16.1,-0.2) -- (16.4,-0.2) -- (16.7,-1) -- (17.7,-1) -- (18,0) -- (18.5,0);
      \node[anchor=south] at (17.2,2.85) {$\alpha_1^-$};
      \node[anchor=south] at (17.2,2.05) {$\alpha_0^-$};
      \node[anchor=south] at (17.2,1.25) {$\alpha_2^-$};
      \node[anchor=south] at (17.2,0.45) {$\alpha_2^-$};
      \node[anchor=south] at (17.2,-0.35) {$\alpha_0^-$};
      \node[anchor=south] at (17.2,-1.15) {$\alpha_1^-$};
      
      \node[anchor=west] at (18.4,2) {$\beta_0$};
      \node[anchor=west] at (18.4,1) {$\beta_1$};
      \node[anchor=west] at (18.4,0) {$\beta_2$};

      \path[use as bounding box] (3.75,7.5);
      \node[] at (3.75,7.5) {Before the update};
      \node[] at (14.75,7.5) {After the update};

      \end{tikzpicture}
  \end{center}
  \caption{Using Dyck paths saves many changes when the input graph is updated \label{fig:cluster}}
\end{figure}

\subparagraph*{The refined construction.}

A ``nominal'' vertex of the graph~$\Gamma_G$ is a pair $(i,\pi)$,
where $0 \leqslant i \leqslant N$ and $\pi$~is a function $\pi \colon \calS_i
\to Q$, intended to represent the state progression at step~$i$.
Labels of~$\Gamma_G$ are the pairs $(i,\pi)^+$ and $(i,\pi)^-$, and
the neutral label $\bullet$.  We~write~$\pi_{\INIT}$ for the unique
function from~$\emptyset$ to~$Q$.  Now, for $1 \le i \le N$, let $n_i$
be the unique node in~$\calS_i \setminus \calS_{i-1}$, and recall that
$\rho_G\!\restriction_{\calS_i}
= \Pi_i(\rho_G\!\restriction_{\calS_{i-1}},\Lambda_G(n_i))$.  We
therefore add the following edges and vertices:
\begin{itemize}
 \item vertices $(i-1)$ and $(i-1,\gamma)$ for all $\gamma \in \Sigma$;
 \item edges $(i-1,\pi) \xrightarrow{(i-1,\pi)^+} (i-1)$ and
   $(i-1,\gamma) \xrightarrow{(i-1,\pi)^-} (i,\pi')$ for all
   $\gamma \in \Sigma$ and all $\pi \colon \calS_{i-1} \to Q$, where $\pi'
   \colon \calS_i \to Q$ is such that $\pi' = \Pi_i(\pi,\gamma)$;
 \item one neutral edge $(i-1) \xrightarrow{\bullet} (i-1,\gamma)$, where $\gamma = \Lambda_G(n_i)$.
\end{itemize}

Finally, observe that $n_N$ is the root of~$\calT_\star$, that
$\calS_N = \{n_N\}$, and that the run~$\rho_G$ is accepting if,
and only~if, $\rho_G(n_N) \in Q_\END$.  Hence, we~complete the
construction of~$\Gamma_G$ by adding a last state~$\top$ and neutral
edges $(N,\pi) \xrightarrow{\bullet} \top$ for those functions $\pi
\colon \calS_N \to Q$ such that $\pi(n_N) \in Q_\END$.

This construction is both sound and complete, and well-behaved under
modifications of~$G$, as outlined by the following results.

\begin{proposition}
\label{pro:sound-complete}
The automaton $\calA$ accepts the labeled tree~$\calT_G$ if, and
only~if, there exists a Dyck path in~$\Gamma_G$ from the vertex
$(0,\pi_{\INIT})$ to the vertex~$\top$.
\end{proposition}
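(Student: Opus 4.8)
The plan is to prove a slightly stronger statement by induction on the level index~$i$, from which the proposition follows immediately: for every $0 \leqslant i \leqslant N$ and every function $\pi \colon \calS_i \to Q$, there is a Dyck path in~$\Gamma_G$ from $(0,\pi_{\INIT})$ to $(i,\pi)$ if, and only~if, $\pi = \rho_G\!\restriction_{\calS_i}$. The driving tool is the step-wise recurrence $\rho_G\!\restriction_{\calS_i} = \Pi_i(\rho_G\!\restriction_{\calS_{i-1}},\Lambda_G(n_i))$ recorded on page~\pageref{local-rules}, which the construction of~$\Gamma_G$ is designed to mimic. I would first record two structural facts about~$\Gamma_G$: it is acyclic (the nominal index strictly increases along any path, since every edge moves from a level-$(i-1)$ gadget to a level-$i$ one), and the only edges entering a nominal vertex $(i,\pi)$ with $i \geqslant 1$ are the closing edges labeled $(i-1,\sigma)^-$, while the only edge leaving a hub vertex~$(i-1)$ is the single neutral edge $(i-1) \xrightarrow{\bullet} (i-1,\Lambda_G(n_i))$.

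For the inductive step, consider a Dyck path~$P$ ending at~$(i,\pi')$. Its last edge is necessarily a closing edge with label $(i-1,\sigma)^-$ for some $\sigma \colon \calS_{i-1} \to Q$, so the word read along~$P$ ends with a closing bracket. I would then invoke the canonical factorization of a Dyck word that ends with $\lambda^- = (i-1,\sigma)^-$: it must be of the form $u \cdot (i-1,\sigma)^+ \cdot v \cdot (i-1,\sigma)^-$ with $u,v \in \bfD$, where the matching open bracket $(i-1,\sigma)^+$ is uniquely determined by the bijection. Translating back into~$\Gamma_G$, the factor~$u$ is a Dyck path from $(0,\pi_{\INIT})$ to $(i-1,\sigma)$ (the unique source of a $(i-1,\sigma)^+$-edge), and~$v$ is a Dyck path from the hub~$(i-1)$ to the source of the closing edge. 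The crux of the argument lies here: since the hub~$(i-1)$ has the single outgoing neutral edge to $(i-1,\Lambda_G(n_i))$, and every edge leaving that vertex is a closing edge—which would force~$v$ to contain an unmatched closing bracket—the only possibility is $v = \bullet$, with the closing edge leaving $(i-1,\Lambda_G(n_i))$. Hence $\pi' = \Pi_i(\sigma,\Lambda_G(n_i))$, and by the induction hypothesis applied to~$u$ we get $\sigma = \rho_G\!\restriction_{\calS_{i-1}}$, so $\pi' = \Pi_i(\rho_G\!\restriction_{\calS_{i-1}},\Lambda_G(n_i)) = \rho_G\!\restriction_{\calS_i}$. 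Conversely, extending the path given by the induction hypothesis by the three-edge gadget $(i-1,\sigma) \xrightarrow{+} (i-1) \xrightarrow{\bullet} (i-1,\Lambda_G(n_i)) \xrightarrow{-} (i,\Pi_i(\sigma,\Lambda_G(n_i)))$ with $\sigma = \rho_G\!\restriction_{\calS_{i-1}}$, whose label $(i-1,\sigma)^+ \cdot \bullet \cdot (i-1,\sigma)^-$ is a Dyck word, produces the required Dyck path to $(i,\rho_G\!\restriction_{\calS_i})$.

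The base case $i=0$ is immediate: $(0,\pi_{\INIT})$ is a source, so the only path from it to a level-$0$ vertex is the empty one, labeled by the Dyck word~$\varepsilon$, and $\rho_G\!\restriction_{\calS_0} = \pi_{\INIT}$. To conclude, I would combine the invariant at $i=N$ with the final gadget: since $\calS_N = \{n_N\}$, a Dyck path reaches $(N,\pi)$ only for $\pi = \rho_G\!\restriction_{\calS_N}$, and the neutral edge $(N,\pi) \xrightarrow{\bullet} \top$ exists exactly when $\pi(n_N) \in Q_{\END}$. Appending a neutral symbol preserves membership in~$\bfD$ (erasing the $\bullet$'s leaves the underlying bracket word unchanged), so a Dyck path from $(0,\pi_{\INIT})$ to~$\top$ exists if, and only~if, $\rho_G(n_N) \in Q_{\END}$, which by Theorem~\ref{thm:courcelle} is exactly when $\calA$ accepts~$\calT_G$. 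The only genuinely delicate point is the forced factorization in the inductive step; everything else is bookkeeping about which edges enter and leave each class of vertices.
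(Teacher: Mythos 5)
Your proof is correct and takes essentially the same route as the paper's: the paper also characterizes the Dyck paths of $\Gamma_G$ as exactly those decomposing into the three-edge gadgets $(i-1,\pi) \xrightarrow{(i-1,\pi)^+} (i-1) \xrightarrow{\bullet} (i-1,\gamma) \xrightarrow{(i-1,\pi)^-} (i,\pi')$ with $\gamma = \Lambda_G(n_i)$ forced by the unique neutral edge out of each hub, and concludes that such paths track the state progression $(i,\rho_G\!\restriction_{\calS_i})$; you merely make explicit the induction on levels and the bracket-matching argument that the paper leaves implicit. One cosmetic slip: the final equivalence between $\rho_G(n_N)\in Q_{\END}$ and $\calA$ accepting $\calT_G$ is just the definition of acceptance of a tree automaton, not an appeal to Theorem~\ref{thm:courcelle}.
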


\begin{proof}
  A path from $(0,\pi_{\INIT})$ to  a vertex
   $(N,\varpi)$, where $\varpi$ is a function $\calS_N \to Q$,
  is Dyck if, and only~if, it~uses only sub-paths of the form
  $(i-1,\pi) \xrightarrow{(i-1,\pi)^+} (i-1) \xrightarrow{\bullet}
  (i-1,\gamma) \xrightarrow{(i-1,\pi)^-} (i,\pi')$ with $\gamma =
  \Lambda_G(n_i)$ and $\pi' = \Pi_i(\pi,\gamma)$.  Hence, such a Dyck
  path exists if, and only~if, $\varpi = \rho_G(n_N)$, where $\rho_G$
  is the run of $\calA$ on~$\calT_G$, in which case the intermediate
  vertices of the form~$(i,\pi)$ are the
  vertices~$(i,\rho_G\!\restriction_{\calS_i})$.  The result follows
  immediately.
\end{proof}

\begin{proposition}
\label{pro:small-change}
Let $e = (v,w)$ be an edge of the maximal graph~$G_\star$, and let $G$
and $G'$ be two subgraphs of~$G_\star$ such that $G'$ is obtained by
adding (resp. deleting) the edge~$e$ to~$G$.
The~graph~$\Gamma_{G'}$ is obtained by deleting an edge~$e_1$
from~$\Gamma_G$ and inserting another edge~$e_2$ instead. Furthermore,
both edges~$e_1$ and~$e_2$ are \FO-definable in terms of~$e$,
of~$\Gamma_G$, and of some auxiliary precomputed predicates.
\end{proposition}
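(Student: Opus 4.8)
The plan is to exploit the fact that the construction of~$\Gamma_G$ isolates the dependence on the input graph~$G$ into a single, easily located edge. My first move is to inspect the construction of~$\Gamma_G$ and observe that every ingredient except for one family of edges is independent of~$G$. Indeed, the nominal vertices~$(i,\pi)$, the auxiliary vertices~$(i-1)$ and~$(i-1,\gamma)$, the labels, the edges $(i-1,\pi) \xrightarrow{(i-1,\pi)^+} (i-1)$ and $(i-1,\gamma) \xrightarrow{(i-1,\pi)^-} (i,\Pi_i(\pi,\gamma))$, and the final edges into~$\top$ all depend only on the fixed data $\calT_\star$, $\chi$, the bottom-up progression $(\calS_i)_{0 \leqslant i \leqslant N}$ and the automaton~$\calA$ (recall that~$\Pi_i$ is determined by~$\calA$ and the fixed tree shape). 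In particular, the negative edges range over \emph{all} $\gamma \in \Sigma$ and so do not see~$G$. The only edges that depend on~$G$ are the~$N$ neutral edges $(i-1) \xrightarrow{\bullet} (i-1,\Lambda_G(n_i))$, one per level~$i$, each selecting the actual label~$\Lambda_G(n_i)$ among the $|\Sigma|$ candidates; this is precisely the point of the Dyck encoding illustrated in Figure~\ref{fig:cluster}.

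Next I would analyse how the labeling~$\Lambda_G$ reacts to inserting or deleting a single edge $e = (v,w)$. Writing $\Lambda_G(n) = \tuple{\chi(A),\chi(B),\chi(C)}$, note that $A = \mathbf{T}_\star(n) \cap \mathbf{T}_\star(m)$ and $B = \mathbf{T}_\star(n) \setminus A$ depend only on the fixed bags, so only the third component~$\chi(C)$, with $C = \{(v',w') \in E \mid (v',w') \in \mathbf{T}_\star(n)^2 \setminus A^2\}$, can change. The key claim is that~$e$ contributes to~$C$ at exactly one node~$n^\star$, namely the topmost node whose bag contains both~$v$ and~$w$. This follows from the connectivity axiom of tree decompositions: the set of nodes whose bags contain both endpoints is the intersection $\calN_v \cap \calN_w$ of two connected subtrees, hence a connected subtree with a unique top node~$n^\star$. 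For any node~$n$ strictly below~$n^\star$ in this subtree, its parent bag still contains both~$v$ and~$w$, so $(v,w) \in A^2$ and $e \notin C$ at~$n$; whereas at~$n^\star$ the parent bag (if any) misses at least one endpoint, and $A = \emptyset$ when~$n^\star$ is the root, so $(v,w) \notin A^2$ and $e \in C$. Crucially, $n^\star$ depends only on~$\calD_\star$ and on~$e$, not on the current edge set.

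I would then conclude the structural part. Since~$\chi$ is injective on each bag, the pair $(\chi(v),\chi(w))$ is the image of no edge of~$C$ other than~$e$, so passing from~$G$ to~$G'$ replaces~$\chi(C)$ by $\chi(C) \cup \{(\chi(v),\chi(w))\}$ (insertion) or $\chi(C) \setminus \{(\chi(v),\chi(w))\}$ (deletion) and leaves~$\Lambda_G$ unchanged at every node other than~$n^\star$. Setting $i^\star = \posti(n^\star)$ and letting $\gamma = \Lambda_G(n_{i^\star})$ and $\gamma' = \Lambda_{G'}(n_{i^\star})$ be the old and new labels, the single affected neutral edge is replaced: $\Gamma_{G'}$ is obtained from~$\Gamma_G$ by deleting $e_1 = \bigl((i^\star-1) \xrightarrow{\bullet} (i^\star-1,\gamma)\bigr)$ and inserting $e_2 = \bigl((i^\star-1) \xrightarrow{\bullet} (i^\star-1,\gamma')\bigr)$, which is exactly the asserted form.

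Finally, for the \FO-definability I rely on a \LOGSPACE precomputation of three static ingredients: the index~$i^\star(e)$ for each $e \in E_\star$, the colors~$\chi(v)$ and~$\chi(w)$, and the static components~$\chi(A),\chi(B)$ of every label. Given these, $e_1$ is \FO-definable as the unique $\bullet$-labeled edge of~$\Gamma_G$ leaving the vertex~$(i^\star-1)$, and reading off its target yields the current label~$\gamma$. Because~$\Sigma$ is finite (of size bounded in terms of~$\kappa$), the passage $\gamma \mapsto \gamma'$, which merely inserts or deletes the pair $(\chi(v),\chi(w))$ in the third component, is a fixed finite relation and hence \FO, so that~$e_2$ is then the $\bullet$-edge from~$(i^\star-1)$ to~$(i^\star-1,\gamma')$. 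I expect the main obstacle to be the second paragraph, that is, establishing that exactly one node changes label: one must combine the connectivity axiom (to pin down the unique node~$n^\star$, independently of the current edges) with the injectivity of the proper coloring on bags (to guarantee that the color pair is added or removed cleanly, without interfering with the other edges recorded at~$n^\star$).
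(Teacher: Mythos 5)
Your proof is correct and follows essentially the same route as the paper's: it identifies the topmost node $n^\star$ whose bag contains both endpoints of~$e$, shows that only the neutral edge at level $\posti(n^\star)$ changes, and obtains \FO-definability from precomputed mappings. In fact, your use of the connectivity axiom (to pin down $n^\star$) and of the injectivity of~$\chi$ on bags (needed for the deletion case) spells out details that the paper's proof asserts without justification.
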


\begin{proof}
We only deal here with insertion of an edge.
We associate with the edge $e = \tuple{v,w}$ of~$G_\star$ 
a mapping $\add_e \colon \Sigma \to \Sigma$
defined by
$\add_e \colon \tuple{\chi(A),\chi(B),\chi(C)} \to
\tuple{\chi(A),\chi(B),\chi(C) \cup \{\tuple{\chi(v),\chi(w)}\}}$.
Then, let $n$ be the top-most node of~$\calT_\star$ such that both~$v$
and~$w$ belong to~$\mathbf{T}(n)$, and let ${i = \posti(n)}$.
Lemma~\ref{lem:progression} states that $n \in \calS_i \setminus
\calS_{i-1}$.  Hence, the labeling functions $\Lambda_G$
and~$\Lambda_{G'}$ coincide on all nodes $m \neq n$, and we have
$\Lambda_{G'}(n) = \add_e(\Lambda_G(n))$.  Consequently,
the~graph~$\Gamma_{G'}$ is obtained from~$\Gamma_G$ in two consecutive
steps:
\begin{enumerate}
\item we delete the only outgoing edge, of the vertex $(i-1)$, which
  is a neutral edge of the form $(i-1) \xrightarrow{\bullet}
  (i-1,\gamma)$;
\item we add the new edge $(i-1) \xrightarrow{\bullet}
  (i-1,\add_e(\gamma))$.
\end{enumerate}

The case of deletion is analogous, and requires using a mapping
$\del_e$ similar to $\add_e$.
Since the mappings $e \mapsto i$, $(e,\gamma) \mapsto
\add_e(\gamma)$ and $(e,\gamma) \mapsto \del_e(\gamma)$ can be
precomputed, it~follows that both the edge~$e_1$ that we deleted
from~$\Gamma_G$ and the edge~$e_2$ that we inserted instead can be
computed with \FO formulas.
\end{proof}

\section{Overall complexity analysis}
\label{section:overall}
In this section, we analyze the complexity of our dynamic algorithm.
While adequate notions of reduction do exist in dynamic complexity
  (see e.g.~\cite{PI97,HI02}),
  our reduction does not satisfy all criteria, so we need to compute
  the complexity of our algorithm by hand.

First, denoting by $\calV$ and $\calL$ the vertex set and the label
set of $\Gamma_G$, Lemma~\ref{lem:Dyck} states that the Dyck reachability
problem in $\Gamma_G$ can be solved by using \FO update formulas
\emph{over the universe $\calV \uplus \calL$}.
However, we need \FO formulas over the universe~$V$ of our
\MSO model checking problem, i.e., $V$~is the vertex set of the input graph.
Hence, we must embed $\calV \uplus \calL$ into a set of tuples
of elements of~$V$ of finite arity.

Lemma~\ref{lemma:puredecomposition} states that $\mathcal{T}_\star$ is of
height at most $c(\kappa) \cdot (\log_2(N)+1)$, where $N = |V|$, and
Lemma~\ref{lem:progression} proves that
$|\calS_i| \leqslant 2 c(\kappa) \cdot (\log_2(N)+1)$ for all $i \leqslant N$.
It follows that $|\calV| = 1 + \sum_{i=0}^N |Q|^{|\calS_i|} =
\calO(N^{2 c(\kappa) \log_2(|Q|) + 1})$, which is
polynomial in~$|V|$. Likewise, $|\calL|$~is polynomial in~$|V|$, and
therefore $\calV \uplus \calL$ can be embedded into some set~$V^k$,
where $k$ is a large enough integer (which depends only on~$\kappa$
and on the \MSO formula~$\varphi$).

In the end, during the precomputation phase, 
the algorithm successively computes:
\begin{enumerate}
\item a binary rooted tree decomposition $\mathcal{D}_\star =
  \tuple{\mathcal{T}_\star,\mathbf{T}_{\star}}$ of the maximal
  graph $G_\star = \tuple{V,E_\star}$, of width $4\kappa+3$,
  such as described in Lemma~\ref{lemma:puredecomposition};
\item a (bottom-up, deterministic) tree automaton
$\calA_{\varphi,4\kappa+3}$ such as defined in Courcelle's theorem;
\item a $\calD_\star$-coloring function $\chi$, a $(\chi,\calD_\star)$-succinct tree
  decomposition of $G_\star$, and a bottom-up progression
  $(\calS_i)_{0 \leqslant i \leqslant N}$ of $\calT_\star$;
\item the vertices, labels and edges of the graph~$\Gamma_{G_E}$,
where $G_E$ is the initial input graph;
\item an embedding $\calV \uplus \calL \mapsto V^k$;
\item mappings $e \mapsto i$, $(e,\gamma) \mapsto \add_e(\gamma)$
and $(e,\gamma) \mapsto \del_e(\gamma)$ mentioned in the proof of
Proposition~\ref{pro:small-change};
\item the value of the auxiliary predicate $\Delta$
(mentioned in Lemma~\ref{lem:Dyck}) on
$\Gamma_{G_E}$.
\end{enumerate}

\newpage

\begin{lem}
\label{lem:precomputation-in-logspace}
Each of these 7 steps can be performed in \LOGSPACE.
\end{lem}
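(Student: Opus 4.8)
The plan is to regard the seven items as a pipeline, each step consuming the outputs of the previous ones, and to lean on the standard fact that the composition of \LOGSPACE-computable functions is again \LOGSPACE-computable: whenever a later step needs a bit of an earlier output it is recomputed on demand rather than stored, so no step ever holds a polynomial-size intermediate object in working memory. With this in hand, steps~1 and~2 are immediate. Step~1 is exactly the content of Lemma~\ref{lemma:puredecomposition}, whose construction already runs in space $\calO(c(\kappa)\log_2 N)$. Step~2 outputs an object depending only on the fixed parameters $\varphi$ and~$\kappa$; the automaton $\calA_{\varphi,4\kappa+3}$ of Theorem~\ref{thm:courcelle} is therefore of constant size and can be written out in $\calO(1)$ space.

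For steps~3--6 the guiding observation is that, by Lemma~\ref{lemma:puredecomposition} and Lemma~\ref{lem:progression}, each bag of~$\calT_\star$ has $\calO(1)$ vertices (for fixed~$\kappa$), each set~$\calS_i$ has $\calO(\log N)$ vertices, and~$\calT_\star$ has height $\calO(\log N)$; post-indices, the sets~$\calS_i$, and the separating node~$n_i$ are thus obtained by subtree-size counting and ancestor queries, all of which are \LOGSPACE tree-navigation tasks. For the colouring~$\chi$ I would use a single root-to-$n_v$ downward sweep, maintaining only the colouring of the current bag ($\calO(\log N)$ bits) and a pointer: at each step the adhesion vertices keep their colours and the newly introduced vertices (canonically ordered by their index in~$V$) receive the least free colours; since at most $4\kappa+4$ colours are ever needed this never gets stuck, and because the sweep is strictly downward no recursion stack is required. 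The succinct labels $\Lambda_{G_E}(n)=\tuple{\chi(A),\chi(B),\chi(C)}$ are then local computations from~$\chi$, the bag, its parent and the edge set, and the vertices, labels and edges of~$\Gamma_{G_E}$ (step~4) are enumerable because the edge relations amount to testing $\pi'=\Pi_i(\pi,\gamma)$, a local application of~$\delta$ to $\calO(\log N)$-bit data, with the polynomial-size output written to the write-only tape. For step~5 the complexity count already gives $|\calV|+|\calL|=N^{\calO(1)}$, so a vertex $(i,\pi)$ is encoded by~$i$ together with the base-$|Q|$ index of~$\pi$ relative to the ordered set~$\calS_i$, an $\calO(\log N)$-bit number fitting into a constant number of elements of~$V$; encoding and decoding are elementary arithmetic. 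Step~6 is similar: $e\mapsto i$ is the post-index of the top-most node containing both endpoints of~$e$, and $\add_e,\del_e$ merely insert or delete one pair $\tuple{\chi(v),\chi(w)}$ inside a bounded-size label.

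The hard part will be step~7, the computation of the $4$-ary predicate~$\Delta$ of Lemma~\ref{lem:Dyck} on~$\Gamma_{G_E}$: a priori this is a Dyck-reachability computation, and since plain directed reachability is not known to lie in \LOGSPACE, a direct approach would fall short. My plan is to exploit the rigid shape of~$\Gamma_{G_E}$ already used in the proof of Proposition~\ref{pro:sound-complete}: after an opening symbol $(i,\pi)^+$ one is forced to the merge vertex~$(i)$, whose only continuation is the neutral edge to~$(i,\gamma)$, out of which every edge carries a closing symbol. Hence every path in~$\Gamma_{G_E}$ has Dyck stack-height at most~$1$, and a closing edge compatible with a pending open $(i,\pi)^+$ is forced to be the one labelled~$(i,\pi)^-$. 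Consequently Dyck-consistent reachability is \emph{functional}: from any vertex there is at most one Dyck-consistent continuation at each level. I would therefore recast $\Delta(x_1,y_1,x_2,y_2)$ as a bounded case analysis --- whether the label of the first path is balanced or ends with a single unmatched open, and symmetrically for the second (so that at most one side carries an excess open, which the other must immediately close) --- each case reducing to following this functional successor relation from a fixed vertex. Reachability in a functional acyclic graph is decidable by following the unique successor with an $\calO(\log N)$-bit pointer and counter, hence in \LOGSPACE; ranging over all $N^{\calO(1)}$ quadruples and tabulating~$\Delta$ then stays in \LOGSPACE, which is the remaining point to nail down carefully.
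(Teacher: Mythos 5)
Your proposal is correct and takes essentially the same route as the paper: steps~1--6 are dispatched as routine \LOGSPACE{} computations (the paper simply calls them straightforward, and composition of \LOGSPACE{} functions handles the pipeline), and for step~7 your observation that paths in $\Gamma_{G_E}$ have Dyck stack-height at most~$1$ and admit a unique Dyck-consistent successor is exactly the paper's statement that every vertex is the source of at most one minimal Dyck path, making Dyck reachability a \LOGSPACE{} pointer-following task. Your bounded case analysis for $\Delta(x_1,y_1,x_2,y_2)$ (balanced label versus one unmatched open closed immediately by the other path) coincides with the paper's factorization into Dyck paths plus Dyck prefix/suffix pieces of length at most~$2$.
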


\begin{proof}
The formula $\varphi$ and the integer~$\kappa$ are fixed.
Hence, Lemma~\ref{lemma:puredecomposition} proves that the step~1 can be
performed in \LOGSPACE, and the step~2 is completed in constant time.
Since performing the steps~3--6 in \LOGSPACE is straightforward,
it remains to deal with the step~7.

Let $\varpi$ be a path with label $\lambda$ in $\Gamma_G$.
We say that $\varpi$ is a \emph{Dyck prefix} path
if $\lambda$ is a prefix of a Dyck word (which may be $\lambda$ itself)
and if its proper prefixes are not Dyck words;
that $\varpi$ is a \emph{Dyck suffix} path
if $\lambda$ is a suffix of a Dyck word
and if its proper suffixes are not Dyck words;
that $\varpi$ is a \emph{minimal Dyck} path if
$\varpi$ is both a Dyck prefix and a Dyck suffix path.

Minimal non-empty Dyck paths are paths of the form
$(i-1) \xrightarrow{\bullet} (i-1,\gamma)$, $(\ell,\pi) \xrightarrow{\bullet} \top$,
and $(i-1,\pi) \xrightarrow{(i-1,\pi)^+} (i-1) \xrightarrow{\bullet}
(i-1,\gamma) \xrightarrow{(i-1,\pi)^-} (i,\pi')$.
Since Dyck prefix paths are the prefixes of minimal Dyck paths,
and Dyck suffix paths are the suffixes of minimal Dyck paths,
they have length at most 2.
Furthermore, every Dyck path is a product of non-empty minimal Dyck paths.
In addition, if $\varpi_1$ and $\varpi_2$ are paths with labels
$\lambda_1$ and $\lambda_2$ such that $\lambda_1 \cdot \lambda_2$
is a Dyck word, then there exists factorisations
$\varpi_1 = \varpi_1^{\INIT} \cdot \varpi_1^{\END}$ and
$\varpi_2 = \varpi_2^{\INIT} \cdot \varpi_2^{\END}$ such that
$\varpi_1^{\INIT}$ and $\varpi_2^{\END}$ are Dyck paths,
$\varpi_1^{\END}$ is a Dyck prefix and $\varpi_2^{\INIT}$ is a Dyck suffix.

Hence, the predicate $\Delta(x_1,y_1,x_2,y_2)$ holds if, and only~if,
there exists vertices $z_1$ and~$z_2$ and paths~$\varpi_1$ (from~$z_1$ to~$y_1$)
and $\varpi_2$ (from~$x_2$ to~$z_2$) with labels $\lambda_1$ and~$\lambda_2$ such that:
\begin{itemize}
 \item $\varpi_1$ and $\varpi_2$ are of length at most 2,
 and $\lambda_1 \cdot \lambda_2$ is a Dyck word;
 \item there exists Dyck paths from $x_1$ to $z_1$ and from $z_2$ to $y_2$.
\end{itemize}
Finally, note that every vertex of $\Gamma_G$ is the source of at most
one minimal Dyck path. Consequently, for any two vertices~$x$ and~$y$
of~$\Gamma_G$, checking if there exists a Dyck path from~$x$ to~$y$
can be done in \LOGSPACE, and $\Delta$ can be computed in \LOGSPACE
too.
\end{proof}

We sum up the above results as follows.
First, we perform a \LOGSPACE precomputation,
and construct a graph $\Gamma_G$ whose vertex, label and edge sets
can be represented as predicates of finite arity on the universe $V$.
Then, during the update phases, whenever introducing or deleting an
edge $e$ in $G$, we replace one edge of $\Gamma_G$ by another one,
and these edges are identified by precomputed \FO formulas taking the
edge $e$ into account, as stated in Proposition~\ref{pro:small-change}.
Consequently, and since the Dyck reachability problem is in \Dyn\FO,
updating the edge-membership predicate
of $\Gamma_G$ and the auxiliary predicate $\Delta$,
which is useful for solving the Dyck reachability problem in $\Gamma_G$,
can be done with \FO formulas.
Finally, deciding whether $G$ satisfies the formula $\varphi$,
i.e., whether there exists a suitable Dyck path in $\Gamma_G$,
can be done using directly the auxiliary predicate $\Delta$,
which completes the proof of Theorem~\ref{mainresult}.

\section{Conclusion}
\label{sec-concl}
We developed a dynamic algorithm for checking a (fixed) \MSO formula
over (evolving) subgraphs of a given graph of bounded
tree-width. A~natural extension of this work 
would consist in getting rid of the hypothesis that there exists a
maximal graph~$G_\star$ of which the graphs under scrutiny are
subgraphs.
There are two main obstacles for this to be achieved in our approach:
first, we would need to be able to
dynamically compute tree decompositions of ``moderate'' width of our
dynamic graphs; then, we~would have to adapt the structure of our
graph~$\Gamma_G$ to take into account these evolving tree decompositions.

Another direction of research, which was successfully put into practice
in~\cite{arxiv} when dealing with the particular case of parity games,
would consist, given an input formula $\varphi = \exists X.\ \varphi'(X)$
(starting with an existential quantifier), to~compute a witness~$X$
of the satisfiability of~$\varphi'$.

\end{document}